 \newtheorem{theorem}{Theorem}[section]
 \newtheorem{lemma}[theorem]{Lemma}
\newtheorem{proposition}[theorem]{Proposition}
\newtheorem{claim}[theorem]{Claim}
\newcommand{\NP}{{\mathsf{NP}}}
\newcommand{\UGC}{{\mathsf{UGC}}}
\newcommand{\APX}{{\mathsf{APX}}}
\begin{document}
\title{On minimum $t$-claw deletion in split graphs}
\author{Sounaka Mishra}
\address{Department of Mathematics, IIT Madras, Chennai-600036, India} 
\email{sounak@iitm.ac.in}
\date{}
\maketitle

\begin{abstract}
 For $t\geq 3$, $K_{1, t}$ is called $t$-claw. In minimum $t$-claw deletion problem (\texttt{Min-$t$-Claw-Del}), given a graph $G=(V, E)$, it is required to find a vertex set $S$ of minimum size such that $G[V\setminus S]$ is $t$-claw free. 
 In a split graph, the vertex set is partitioned into two sets such that one forms a clique and the other forms an independent set. Every $t$-claw in a split graph has a center vertex in the clique partition. This observation motivates us to consider the minimum one-sided bipartite $t$-claw deletion problem (\texttt{Min-$t$-OSBCD}). Given a bipartite graph $G=(A \cup B, E)$, in \texttt{Min-$t$-OSBCD} it is asked to find a vertex set $S$ of minimum size such that $G[V \setminus S]$ has no $t$-claw with the center vertex in $A$. A primal-dual algorithm approximates \texttt{Min-$t$-OSBCD} within a factor of $t$. We prove that it is $\UGC$-hard to approximate with a factor better than $t$. We also prove it is approximable within a factor of 2 for dense bipartite graphs. By using these results on \texttt{Min-$t$-OSBCD}, we prove that \texttt{Min-$t$-Claw-Del} is $\UGC$-hard to approximate within a factor better than $t$, for split graphs. We also consider their complementary maximization problems and prove that they are $\APX$-complete. 
\end{abstract}
\keywords{
Claw Vertex Deletion, Graph Algorithms, Approximation Algorithm
}

\section{Introduction}
Given a graph $G=(V, E)$ and a property $\pi$, in minimum node deletion problem it is required to find a vertex set $S$ of minimum size such that $G[V \setminus S]$ satisfies $\pi$. A detailed study of the computational complexity of these kinds of problems is done when the graph property $\pi$ satisfies certain conditions. We need some definitions to explain these results. A graph property $\pi$ is called nontrivial if an infinite family of graphs satisfies $\pi$ and an infinite class of graphs does not satisfy $\pi$. A property $\pi$ is called hereditary if a graph $G$ satisfies $\pi$ then all its vertex-induced subgraphs satisfy $\pi$. Lewis and Yannakakis \cite{lewis1980node} proved that such kind of node deletion problems are $\NP$-complete, provided $\pi$ is verified in polynomial time. Later, Lund and Yannakakis \cite{lund1993approximation} proved that these problems are  $\APX$-hard for nontrivial hereditary graph properties. They also complemented with constant factor approximation algorithms for these node deletion problems provided the graph property $\pi$ has a finite forbidden graphs characterization.

For $t\geq 3$, the complete bipartite graph $K_{1, t}$ is called a $t$-claw, and the vertex with degree $t$ is called its center vertex. $K_{1,3}$ is widely known claw. In this paper, we consider minimum $t$-claw node deletion problem (\texttt{Min-$t$-Claw-Del}) restricted to split graphs. In \texttt{Min-$t$-Claw-Del}, given a graph $G=(V, E)$ it is required to find a vertex set $S$ of minimum size such that $G[V\setminus S]$ is $t$-claw free. By using the result of Lund and Yannakakis \cite{lund1993approximation}, it can be proved that \texttt{Min-$t$-Claw-Del} is $\APX$-complete and can be approximated within a factor of $t+1$. It can be formulated as a $(t+1)$-hitting set problem and a primal-dual algorithm also approximates it within a factor of $(t+1)$ \cite{hochbaum1982approximation}. \texttt{Min-$t$-Claw-Del} is $\NP$-complete even for bipartite graphs by a generic result of Yannakakis \cite{YannakakisM}. For bipartite graphs, \texttt{Min-$t$-Claw-Del} can be approximated within a factor of $t$ by an iterative rounding algorithm \cite{kumar2014approximation}. This approximation factor is improved to $O(\log(t+1))$ \cite{guruswami2017inapproximability}, because minimum $t$-claw transversal is approximable within $O(\log(t+1))$ and for bipartite graphs this problem coincides with \texttt{Min-$t$-Claw-Del}. 
 
A graph $G=(V, E)$ is called a split graph if the vertex set can be partitioned into $A$ and $B$ such that $G[A]$ is a complete graph and $G[B]$ is an independent set. Braberman et.al. \cite{bonomo2020linear} proved that \texttt{Min-Claw-Del} is $\NP$-complete for split graphs and hard to approximate better than 2, assuming $\UGC$. Later Hsieh et.al. \cite{hsieh2021d} proved that, for $t \geq 3$, \texttt{Min-$t$-Claw-Del} is $\NP$-complete for split graphs without having $(t+1)$-claws and split graphs with diameter 2. They also proposed a polynomial time algorithm for $t$-block graphs.

Fujito \cite{fujitoM} studied the node deletion problems for several graph properties, notably node deletion problems for matroidal properties. A graph property $\pi$ is a matroidal property if, on any graph $G$, the edge sets of subgraphs of $G$ satisfying $\pi$ form a family of independent sets in some matroid defined on the edge set of $G$. These kinds of node deletion problems are formulated as a submodular set cover problem and a primal-dual approximation algorithm is used to compute an approximate deletion set. For some problems, it has been proved that they are approximable within a constant factor.  For example, if $\pi$ is uniformly $k$-sparse then the corresponding node deletion problem is approximable within a factor of 2 \cite{fujitoM}. However, even if the property $\pi$ is not matroidal the corresponding node deletion problem can be approximated within a nice 
bound under certain assumptions. For example, a bounded degree deletion problem is formulated as a submodular set cover problem with help of a 2-polymatroid matching on the edge set of the input graph \cite{fujito2017approximating}. It is important to observe that this graph property ($b$-matching) is not matroidal. 

It is easy to observe that in a split graph every $t$-claw has its center vertex in its clique partition. Based on this observation, we consider the complexity of minimum one-sided $t$-claw deletion problem (\texttt{Min-$t$-OSBCD}) for bipartite graphs. Given  a bipartite graph $G=(A \cup B, E)$, in \texttt{Min-$t$-OSBCD} it is required to find a vertex set $S$ such that $G[(A\cup B) \setminus S]$ has no $t$-claw with the center vertex in $A$. We prove that \texttt{Min-$t$-OSBCD} is $\UGC$-hard to approximate better than $t$ and can be approximated within a factor of $t$. This lower bound result on the approximability of \texttt{Min-$t$-OSBCD} implies that  \texttt{Min-$t$-Claw-Del} for the split graph can not be approximated within a factor smaller than $t$ unless $\UGC$ is false. This lower bound is a stronger lower bound on \texttt{Min-$t$-Claw-Del} for split graphs than that of Braberman et.al. \cite{bonomo2020linear}. The graph property ``one-sided-$t$-claw free on bipartite graphs'' is matroidal. The vertex deletion problem associated with this graph property can be formulated as a submodular set cover problem via a 2-polymatroid matching function on the edge set of the input graph. We prove that if $d(v) \geq 2(t-1)$, for all $v \in A$, in the input graph $G=(A \cup B, E)$, then \texttt{Min-$t$-OSBCD} is $\APX$-complete and can be approximated within a factor of 2. A similar result holds for \texttt{Min-$t$-Claw-Del} for split graphs.

\section{Preliminaries}
Given a vertex set $X \subseteq (A \cup B)$ in a bipartite graph $G=(A \cup B, E)$, we define $X_A =X \cap A$ and $X_B = X \cap B$. For a vertex $v$ in $G$, $\delta(v)$ denotes the set of edges incident on $v$. For an edge set $F \subset E$ and a vertex $v$ in $G$, $d_F(v)$ denoted the degree of the vertex $v$ in the subgraph $(A \cup B, F)$. For a set $X$ of vertices in $G$, we define $\delta(X)$ as the set of edges in $G$ which are incident on a vertex in $X$. Given an edge set $F \subseteq E$ of a graph $G=(V, E)$, we define $d_F(v)$ as the degree of vertex $v$ in $(V, F)$.

A vertex set $S \subseteq V$ is a vertex cover in $G=(V, E)$ if $S \cap \{u, v\} \neq \emptyset$. In minimum vertex cover (\texttt{Min-VC}), given a graph $G$ it is required to find a vertex cover $S$ in $G$ of minimum size. A hypergraph $G=(V, E)$ is consisting of a finite vertex set $V$ and a finite edge set $E$, where each hyperedge $e \in E$ is a subset of $E$. Given a positive integer $t \geq 2$, a hypergraph $G$ is called a $t$-uniform hypergraph if each hyperedge $e$ of $G$ is a $t$-element set. A vertex cover in a hypergraph $G=(V, E)$ is a vertex set $S \subseteq V$ such that $S \cap e \neq \emptyset$. In minimum $t$-hypergraph vertex cover problem (\texttt{Min-$t$-Hyper-VC}), given a $t$-uniform hypergraph, $G$ it is required to find a vertex cover $S$ in $G$ of minimum size. \texttt{Min-VC} is known to be $\APX$-complete and approximable within a factor of $2$. This constant approximation factor for \texttt{Min-VC} is the best possible assuming $\UGC$ \cite{khot_regev}. Also, \texttt{Min-$t$-Hyper-VC} is hard to approximate within a factor of better than $t$, assuming $\UGC$ \cite{khot_regev}.

For a finite set $N$, a non-decreasing, submodular, and integer-valued function $f$ defined on $2^N$ with $f(\emptyset)=0$ is called a polymatroid function and $(N, f)$ a polymatroid. If $f$ additionally satisfies $f(\{j\}) \leq k$, for each $j \in N$, then $(N, f)$ is called a $k$-polymatroid. For any polymatroid $(N, f)$ define another set function $f^d$ such that $f^d(S) = \sum_{j \in S}f(\{j\}) - (f(N) - f(N\setminus S))$. $f^d$ is a polymatroid function and $(N, f^d)$ is called the dual polymatroid of $(N, f)$. In a 2-polymatroid $(E, f)$, a subset $F \subset E$ is a matching in $(E, f)$ if $f(F) = 2|F|$. Also, a subset $F \subset E$ is spanning in $(E, f)$ if $f(E) = f(F)$.

\begin{proposition} \cite{fujito2017approximating}\\
(a) The dual $(E, f^d)$ of a 2-polymatroid $(E, f)$ is also a also a 2-polymatroid. \\
(b) A subset $F \subseteq E$ is a matching in the 2-polymatroid $(E, f)$ if and only if $E\setminus F$ is spanning in $(E, f^d)$. 
\end{proposition}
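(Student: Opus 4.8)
The plan is to prove both parts by direct computation from the definition $f^d(S)=\sum_{j\in S}f(\{j\})-\bigl(f(N)-f(N\setminus S)\bigr)$, leaning on the single-element bound of a $2$-polymatroid. For part (a) the preliminaries already record that $f^d$ is a polymatroid function, so it only remains to check the bound $f^d(\{j\})\le 2$ for each $j\in E$. Evaluating the definition at $S=\{j\}$ (with $N=E$) gives
\[
f^d(\{j\}) = f(\{j\}) - \bigl(f(E) - f(E\setminus\{j\})\bigr).
\]
Since $f$ is non-decreasing, $f(E)-f(E\setminus\{j\})\ge 0$, so $f^d(\{j\})\le f(\{j\})\le 2$, the last step being the $2$-polymatroid hypothesis on $f$; and $f^d(\{j\})\ge f^d(\emptyset)=0$ by monotonicity of $f^d$. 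Hence $(E,f^d)$ is a $2$-polymatroid.

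For part (b) I would first make the spanning condition explicit by computing the two relevant values. Using $f(\emptyset)=0$ and $E\setminus(E\setminus F)=F$,
\[
f^d(E) = \sum_{j\in E}f(\{j\}) - f(E), \qquad f^d(E\setminus F) = \sum_{j\in E\setminus F}f(\{j\}) - \bigl(f(E)-f(F)\bigr).
\]
Subtracting, one finds $f^d(E)-f^d(E\setminus F) = \sum_{j\in F}f(\{j\}) - f(F)$, so the defining condition $f^d(E\setminus F)=f^d(E)$ for ``$E\setminus F$ spanning in $(E,f^d)$'' is exactly
\[
f(F) = \sum_{j\in F}f(\{j\}).
\]
The proof therefore reduces to showing that this equality is equivalent to the matching condition $f(F)=2|F|$.

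The bridge is the inequality chain
\[
f(F) \le \sum_{j\in F}f(\{j\}) \le 2|F|,
\]
whose first inequality is subadditivity of $f$ (from submodularity and $f(\emptyset)=0$) and whose second is the bound $f(\{j\})\le 2$ applied termwise. If $F$ is a matching, then $f(F)=2|F|$ forces both inequalities to be equalities, in particular $f(F)=\sum_{j\in F}f(\{j\})$, so $E\setminus F$ is spanning. The reverse implication is the one delicate step I expect to be the main obstacle: the equality $f(F)=\sum_{j\in F}f(\{j\})$ only makes the \emph{first} inequality tight, and pushing it to $f(F)=2|F|$ requires the second inequality to be tight as well, i.e. $f(\{j\})=2$ for every $j\in F$. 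This is resolved by the standing convention in $2$-polymatroid matching that every ground element has full rank $f(\{j\})=2$ (as for the graphic $2$-polymatroid on the edge set, where each edge covers exactly two vertices); under that convention $\sum_{j\in F}f(\{j\})=2|F|$ identically, and the two characterizations coincide. Combining the two directions then yields the claimed biconditional.
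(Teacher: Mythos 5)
Your computations are correct, but note first that there is no proof in the paper to compare against: the proposition is quoted from \cite{fujito2017approximating} and the paper supplies no argument for it. Judged on its own terms, your part (a) is fine---granting the preliminaries' assertion that $f^d$ is a polymatroid function, the only new content is the singleton bound, and $f^d(\{j\}) = f(\{j\}) - \bigl(f(E)-f(E\setminus\{j\})\bigr) \leq f(\{j\}) \leq 2$ together with $f^d(\{j\}) \geq f^d(\emptyset) = 0$ gives it. (A fully self-contained proof would also verify monotonicity and submodularity of $f^d$, which you import from the preliminaries; that is a reasonable division of labor here.) Your reduction in part (b) of the spanning condition to the identity $f(F) = \sum_{j \in F} f(\{j\})$ is exactly right.

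The caveat you flag at the end is genuine, and it is worth stating plainly: under the paper's definition of a $2$-polymatroid, which requires only $f(\{j\}) \leq 2$, part (b) is false. Take $E = \{e\}$ with $f(\emptyset) = 0$ and $f(\{e\}) = 1$. Then $F = \{e\}$ is not a matching, since $f(F) = 1 \neq 2|F|$, yet $f^d(E) = f(\{e\}) - \bigl(f(E) - f(\emptyset)\bigr) = 0 = f^d(\emptyset)$, so $E \setminus F = \emptyset$ is spanning in $(E, f^d)$. So your appeal to the convention $f(\{j\}) = 2$ for every ground element (the definition used in the polymatroid-matching literature, and the setting of Fujito's framework) is not a cosmetic patch but a necessary strengthening of the hypothesis as transcribed in this paper. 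It is also satisfied in the only place the paper uses the proposition: for the function $f_t$ constructed later, every edge $e$ meets exactly one vertex of $A$, so $f_t(\{e\}) = 2\min\{t-1,1\} = 2$ because $t \geq 3$ (this is item (b) of the paper's lemma), and the proposition applies there as intended. With that hypothesis made explicit, your proof is complete and correct.
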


\section{Hardness results}
In this section, we obtain a lower bound on the approximability of \texttt{Min-$t$-OSBCD} that matches with the approximation upper bound. By using this lower bound, we obtain a similar lower bound on the approximability of \texttt{Min-$t$-Claw-Del} for split graphs.

\begin{theorem} \label{thm1}
For a fixed integer $t \geq 3$, assuming $\UGC$, there is no polynomial time approximation algorithm for \texttt{Min-$t$-OSBCD} within a factor better than $t$. Such a lower bound exists for bipartite graphs $G=(A \cup B, E)$ with $d(v) = t$, for each vertex $v \in A$.
\end{theorem}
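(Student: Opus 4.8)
The plan is to give an approximation-preserving (indeed, optimum-preserving) reduction from \texttt{Min-$t$-Hyper-VC} to \texttt{Min-$t$-OSBCD} and then invoke the Khot--Regev lower bound for $t$-uniform hypergraph vertex cover recalled in the Preliminaries. Starting from a $t$-uniform hypergraph $H=(V,\mathcal{E})$, I would build a bipartite graph $G=(A\cup B, E')$ by setting $B:=V$ and $A:=\{a_e : e\in\mathcal{E}\}$, one clique-side vertex per hyperedge, and joining $a_e$ to exactly the $t$ vertices of $e$. By construction $d(a_e)=t$ for every $a_e\in A$, which already produces the restricted instances claimed in the theorem. Since $a_e$ has degree exactly $t$, the unique $t$-claw centered at $a_e$ must use all $t$ of its neighbours; hence $a_e$ is the center of a surviving $t$-claw in $G[(A\cup B)\setminus S]$ if and only if $a_e\notin S$ and $S$ misses every vertex of $e$. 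Thus $S$ is feasible for \texttt{Min-$t$-OSBCD} exactly when, for each $e\in\mathcal{E}$, either $a_e\in S$ or $S\cap e\neq\emptyset$.

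The crux is a normalization step showing there is an optimal solution contained in $B$. Given a feasible $S$ with some $a_e\in S$, I would replace $a_e$ by any single vertex $v\in e$: the resulting set has no larger size and remains feasible, because $e$ becomes covered through $v\in B$ while the feasibility of every other hyperedge $f$ is untouched (only a vertex of $A$, which lies in no hyperedge, was removed). Iterating pushes any feasible solution down to a feasible $S^\ast\subseteq B$ of no larger size, and for such a set the feasibility condition reads precisely $S^\ast\cap e\neq\emptyset$ for all $e$, i.e. $S^\ast$ is a vertex cover of $H$. Conversely, any vertex cover of $H$ is directly feasible for $G$. Therefore the optima of the two instances coincide, and the map sending a \texttt{Min-$t$-OSBCD} solution to a hypergraph vertex cover runs in polynomial time and never increases cardinality.

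Consequently, any factor-$\alpha$ approximation for \texttt{Min-$t$-OSBCD} yields a factor-$\alpha$ approximation for \texttt{Min-$t$-Hyper-VC}. Since the latter is $\UGC$-hard to approximate within $t-\varepsilon$ for every $\varepsilon>0$, the same lower bound transfers to \texttt{Min-$t$-OSBCD}, and it already holds on the constructed instances in which $d(v)=t$ for all $v\in A$; combined with the primal-dual factor-$t$ upper bound, this pins the approximation threshold at exactly $t$. The main obstacle I anticipate is not the gap transfer, which is immediate once $\OPT$ values are shown equal, but the normalization argument: one must verify that evicting $A$-vertices from a solution cannot break feasibility for the other hyperedges. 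This is exactly why the degree-$t$ gadget is chosen with no ``slack'' neighbours at $a_e$, so that feasibility is equivalent to the clean hitting condition on hyperedges.
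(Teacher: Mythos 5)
Your proposal is correct, and it proves the theorem by a genuinely leaner variant of the paper's argument. Both proofs reduce from \texttt{Min-$t$-Hyper-VC} through a hyperedge--vertex incidence construction with $B=V$ and the hyperedges represented on the $A$ side, so the hardness source and the claimed degree property $d(v)=t$ for $v\in A$ come out the same way; the difference is in the gadget and in the key lemma that pushes solutions into $B$. The paper represents each hyperedge $e$ by $n$ duplicated center vertices $e^1,\dots,e^n$ and excludes solutions meeting $A$ by a counting argument: if a minimal solution contains some $e^i$, minimality forces all $t$ neighbours of $e^i$ to survive, hence all $n$ duplicates of that gadget must be deleted, making the solution larger than $|V|$; this step also leans on the paper's standing assumption that every hyperedge has a disjoint partner hyperedge. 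You instead use a single center $a_e$ per hyperedge and replace the counting by an exchange (normalization) lemma: any feasible solution containing $a_e$ can trade $a_e$ for an arbitrary vertex of $e$ without increasing size or breaking feasibility for other hyperedges, so some optimal solution lies entirely in $B$ and is exactly a vertex cover. Your route buys a smaller construction ($|A|$ equal to the number of hyperedges rather than $n$ times that), no structural assumption on the hypergraph, and a normalization that applies to arbitrary feasible solutions rather than only minimal ones, which makes the approximation-preservation immediate; the paper's blow-up buys a verbatim correspondence between minimal deletion sets and minimal vertex covers, at the cost of the disjointness hypothesis and a larger instance. Both establish equality of optima and hence transfer the Khot--Regev $(t-\varepsilon)$ lower bound to \texttt{Min-$t$-OSBCD} on the restricted instances.
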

\begin{proof}
Let $t \geq 3$ be a fixed integer. We prove this theorem by establishing a reduction from \texttt{Min-$t$-Hyper-VC}. Let $G=(V, E)$ be an instance of \texttt{Min-$t$-Hyper-VC}. We assume that $G$ is a $t$-uniform hypergraph and for any hyperedge, $e \in E$ there exists a hyperedge $e' \in E$ such that $e$ and $e'$ are disjoint. We also assume that $V=\{v_1, v_2, \ldots,v_n\}$ and $E=\{e_1, e_2, \ldots, e_m\}$. From $G$ we construct an instance $G'=(V', E')$ of \texttt{Min-$t$-OSBCD} as follows. 
 \begin{itemize} 
 \item[.] For each vertex $v \in V$, we construct a vertex gadget 
consisting of a vertex $v$ in $G'$.
 \item[.] For each edge $e \in E$, we construct an edge gadget $H_e$ consisting of $n$ distinct vertices $e^1, e^2, \ldots, e^n$.
 \item[.] Finally, we connect vertex gadgets with edge gadgets by introducing the following edges. For each vertex $v \in e$, we introduce $n$ edges $\{(v, e^i) \mid 1 \leq i \leq n\}$. 
\end{itemize}
It can be observed that $G'$ is a bipartite graph with vertex bipartition $V' = A \cup B$, where $A = \cup_{e \in E}\{e^1, e^2, \ldots, e^n\}$ and $B = V$.

\begin{claim}
$S \subseteq V$ is a minimal vertex cover in $G$ if and only if $S$ is a minimal vertex deletion set for \texttt{Min-$t$-OSBCD} for the instance $G'$.
\end{claim}
\begin{proof}
It is important to observe that each vertex in $A$ has degree $t$ and they are the center vertices of all the one-sided $t$-claws in $G'$. 

If $e=\{u_1, \ldots, u_t\} \in  E$ then each $e^i$ ($i=1, 2, \ldots, n$) is adjacent to all $u_i \in e$. This implies that if $S$ is a minimal vertex cover in $G$ then $S$ is a minimal $t$-\texttt{OSBCD} set in $G'$. This also implies that $|S_{o}| \leq |V|$, where $S_o$ is a minimum size $t$-\texttt{OSBCD} set in $G'$.

Conversely, let $S$ be a minimal $t$-\texttt{OSBCD} set in $G'$. If $S \nsubseteq V$, then we claim that $|S| > |V|$. If we assume that $S \nsubseteq V$ then $S$ must contain at least one vertex from $\{e^1, e^2, \ldots, e^n\}$, for some $e \in E$. Without loss of generality, assume that $e^i \in S$, for some $i \in \{1, 2, \ldots, n\}$. Since $S$ is a minimal $t$-\texttt{OSBCD} and $e^i \in S$, all its $t$ neighbours $u_1, \ldots, u_t$ are not in $S$. As $u_1, \ldots, u_t$ are not in $S$, $S$ must contain the vertices $e^1, e^2, \ldots, e^n$. $|S| > |V|$ as we have assumed that $E$ has disjoint pairs of hyperedges.  

Since, a minimum vertex cover $S_o$ in $G$ is of size at most $n$, without loss of generality, we assume that any minimal $t$-\texttt{OSBCD} set $S$ in $G'$ is a subset of $V$. Then, from the construction of $G'$, it follows that any minimal $t$-\texttt{OSBCD} set $S$ in $G'$ which is a subset of $V$ must be a minimal vertex cover in $G$. 
\end{proof}

From the above claim, it follows that the size of a minimum vertex cover in $G$ and the size of a minimum $t$-\texttt{OSBCD} in $G'$ are equal. Therefore \texttt{Min-$t$-OSBCD} is hard to approximate within a factor smaller than $t$ as it is $\UGC$-hard to approximate \texttt{Min-$t$-Hyper-VC} within a factor smaller than $t$ \cite{khotregev}.  
\end{proof}

By using this lower bound result on \texttt{Min-$t$-OSBCD}, we obtain a similar lower bound result for \texttt{Min-$t$-Claw-Del} for split graphs. This improves the lower bound result for \texttt{Min-$t$-Claw-Del} to $t$, for split graphs, and it matches with the upper bound on the approximation factor.

\begin{theorem} \label{thm2}
Assuming $\UGC$, for split graphs \texttt{Min-$t$-Claw-Del} can not be approximated better than $t$, where $t \geq 3$ is a constant integer.
\end{theorem}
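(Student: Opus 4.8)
The plan is to give an optimum‑preserving reduction from \texttt{Min-$t$-OSBCD} (Theorem~\ref{thm1}) to \texttt{Min-$t$-Claw-Del} on split graphs. I would start from a hard bipartite instance $G'=(A\cup B, E')$ of the kind produced in the proof of Theorem~\ref{thm1}, in which every vertex of $A$ has degree exactly $t$, and form a graph $H$ on the same vertex set by adding all $\binom{|A|}{2}$ edges inside $A$. Since $G'[A]$ then induces a clique while $B$ stays independent, $H$ is a split graph with clique $A$ and independent set $B$, and by the observation used throughout the paper every $t$-claw of $H$ has its centre in $A$.

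Next I would classify the $t$-claws of $H$ by how many leaves lie in $A$. Because $A$ is a clique, at most one leaf can lie in $A$. A $t$-claw with all $t$ leaves in $B$ is precisely a one‑sided $t$-claw of $G'$, and since $d_{G'}(c)=t$ such a claw is forced to use all $t$ $B$-neighbours of its centre $c$. Consequently any $t$-claw-deletion set of $H$ is in particular a $t$-\texttt{OSBCD} set of $G'$, which already gives $\OPT(G')\le\OPT(H)$ and transfers one half of the lower bound for free.

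The step I expect to be the main obstacle is the reverse inequality, namely ruling out the remaining \emph{mixed} $t$-claws of $H$: those with centre $c\in A$, one leaf $w\in A$, and $t-1$ leaves in $B$. Such a claw exists exactly when $c$ and $w$ admit $t-1$ common $B$-leaves avoiding $N_{G'}(w)$, i.e. when $|N_{G'}(c)\setminus N_{G'}(w)|\ge t-1$, equivalently $|N_{G'}(c)\cap N_{G'}(w)|\le 1$. These claws are genuinely new, being absent from the one‑sided problem, and in their presence a $t$-\texttt{OSBCD} set (equivalently a hypergraph vertex cover) need no longer destroy every claw of $H$: when two centre‑neighbourhoods are disjoint one is driven toward covering the corresponding constraint twice, which would inflate $\OPT(H)$ beyond $\OPT(G')$ and ruin the tight factor.

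To close the gap I would arrange the reduction so that mixed claws cannot arise at all, by guaranteeing $|N_{G'}(c)\cap N_{G'}(w)|\ge 2$ for every pair of degree‑$t$ vertices $c,w\in A$; then $|N_{G'}(c)\setminus N_{G'}(w)|\le t-2<t-1$, no mixed claw can form, and the $t$-claw-deletion sets of $H$ coincide exactly with the $t$-\texttt{OSBCD} sets of $G'$, so that $\OPT(H)=\OPT(G')$. Concretely, since the neighbourhoods $N_{G'}(c)$ are exactly the hyperedges, this amounts to repeating the gadget of Theorem~\ref{thm1} but starting from \texttt{Min-$t$-Hyper-VC} instances whose hyperedges pairwise meet in at least two vertices. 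The delicate point, and what I would spend the most effort verifying, is that the $\UGC$-hardness of \texttt{Min-$t$-Hyper-VC} within factor $t$ survives this pairwise‑intersection restriction (or, failing that, that the gadget can be augmented with shared $B$-vertices that block mixed claws yet are too expensive to ever enter an optimal deletion set). Once the absence of mixed claws is secured, the Claim proved inside Theorem~\ref{thm1} carries over verbatim and the factor‑$t$ lower bound for split graphs follows.
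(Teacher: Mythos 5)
Your construction is exactly the paper's: form $H$ from $G'$ by adding all edges inside $A$. The paper then simply asserts, as ``easy to observe,'' the very equivalence you decline to take on faith -- that $S$ is a one-sided $t$-claw deletion set of $G'$ if and only if it is a $t$-claw deletion set of $H$ -- and your suspicion about that assertion is well founded. The mixed claws you identify are a genuine obstruction: already for $t=3$, $A=\{c,w\}$, $N(c)=\{b_1,b_2,b_3\}$, $N(w)=\{b_4,b_5,b_6\}$, the set $S=\{b_1,b_4\}$ is a minimal one-sided deletion set of $G'$, yet $\{c;w,b_2,b_3\}$ is an induced $3$-claw of $H-S$, so the forward direction of the paper's claim is false as stated. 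Worse, on the instances produced by Theorem~\ref{thm1} (which by assumption contain disjoint hyperedges), the $t$-claw deletion sets of $H$ lying inside $B$ are precisely the sets hitting every hyperedge \emph{twice}, not the vertex covers, so the composed reduction does not preserve the optimum and the factor-$t$ gap degrades. In short, you have not missed a trick that the paper uses; the paper's one-line proof has the same hole you found, and your half of the argument that does work ($\OPT(G')\le\OPT(H)$, i.e.\ soundness in the ``no'' case) is the only part the paper's reasoning actually delivers.

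However, your proposed repair does not close the gap, so the proof remains incomplete. Requiring every two hyperedges to intersect in at least two vertices destroys the hardness you need rather than preserving it: for $t=3$, a $3$-uniform family in which every two triples share at least two points is either a star through a fixed pair (vertex cover of size $1$) or a subfamily of the four triples on a four-element set, so \texttt{Min-$t$-Hyper-VC} is trivially polynomial on such instances and no UGC-hardness, let alone factor $t$, can survive; since Theorem~\ref{thm2} claims all $t\ge 3$, this alone sinks the plan. The restriction also directly contradicts the disjoint-hyperedge assumption invoked inside the Claim of Theorem~\ref{thm1}, so that Claim cannot be ``carried over verbatim'' -- it would have to be reproved under the opposite hypothesis. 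Finally, your fallback of adding shared $B$-vertices cannot work either: a mixed claw at the pair $(c,w)$ is governed by $N_{G'}(c)\setminus N_{G'}(w)$, and adjoining \emph{common} neighbours to $c$ and $w$ never shrinks this difference. So the statement, whether true or not, needs an argument genuinely different from both your proposal and the paper's proof -- for instance a reduction engineered so that mixed claws are destroyed automatically by any near-optimal solution, together with a soundness analysis that accounts for the ``cover every hyperedge twice'' effect.
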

\begin{proof}
Given an instance $G=(A \cup B, E)$ of \texttt{Min-$t$-OSBCD}, we construct a split graph $H=(A \cup B, E')$, an instance of \texttt{Min-$t$-Claw-Del}, with $E' = E \cup \{(u, v) \mid u, v \in A\}$. It is easy to observe that $S$ is a one-sided $t$-claw deletion set in $G$ if and only if $S$ is a split-$t$-claw deletion set in $H$. From this observation and \autoref{thm1}, it follows that \texttt{Min-$t$-Claw-Del} is hard to approximate with a factor better than $t$ unless $\UGC$ is true.  
\end{proof}

Next, we prove that \texttt{Min-$t$-OSBCD} is $\APX$-complete for bipartite graphs $G=(A \cup B, E)$ with $d(v) \geq 2(t-1)$, for all $v \in A$. 

\begin{theorem} \label{thm3}
For $t \geq 3$, \texttt{Min-$t$-OSBCD} is $\APX$-complete for bipartite graphs $G=(A \cup B, E)$ with $d(v) \geq 2(t-1)$, for all $v \in A$.
\end{theorem}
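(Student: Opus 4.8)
The plan is to establish the two halves of $\APX$-completeness separately: membership in $\APX$ through a factor-$2$ approximation algorithm, and $\APX$-hardness through an approximation-preserving reduction from \texttt{Min-$t$-Hyper-VC}.

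For membership I would first make the matroidal structure explicit. Since $G$ is bipartite, every edge has a unique endpoint in $A$, so $E$ partitions as $E=\bigsqcup_{a\in A}\delta(a)$, and an edge set $F$ induces a subgraph with $d_F(a)\le t-1$ for all $a\in A$ exactly when $|F\cap\delta(a)|\le t-1$ for every $a$. Thus the one-sided-$t$-claw-free edge sets are precisely the independent sets of a partition matroid $\mathcal{M}$ on $E$, which is why the property is matroidal. In the spirit of \cite{fujito2017approximating} I would then encode the node-deletion problem as a submodular set cover problem driven by a $2$-polymatroid matching function $f$ on $E$: a deletion set $S$ is feasible exactly when the surviving edge set $E\setminus\delta(S)$ is independent in $\mathcal{M}$, and by the matching/spanning duality recorded in the proposition above this corresponds to $\delta(S)$ being spanning in the dual $2$-polymatroid $(E,f^d)$. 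Fujito's primal-dual algorithm for $2$-polymatroid spanning then yields a factor-$2$ guarantee. The hypothesis $d(v)\ge 2(t-1)$ for all $v\in A$ enters precisely here: it ensures that the edges removed by deleting a vertex contribute a full matching of the expected size, so that $f$ is a bona fide $2$-polymatroid matching function and the primal-dual ratio is exactly $2$. This places the problem in $\APX$.

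For $\APX$-hardness I would reduce from \texttt{Min-$t$-Hyper-VC}, which is $\APX$-hard, restricting to bounded-occurrence instances (each vertex in $O(1)$ hyperedges) so that the number of hyperedges $m$, the cover optimum $\OPT(G)$, and the constructed optimum $\OPT(G')$ are all $\Theta(n)$. Reusing the ``$n$ parallel copies'' device from the proof of Theorem~\ref{thm1}, for each hyperedge $e$ I would place $n$ centers $e^1,\dots,e^n$ in $A$ sharing the same neighbourhood; this makes deleting all copies of a hyperedge ($n$ deletions) prohibitively expensive and forces any good solution to destroy claws by deleting $B$-vertices. The new ingredient is the neighbourhood itself: each $e^i$ is joined to the $t$ vertices of $e$ and to a private block $L_e$ of exactly $t-2$ leaf vertices, giving degree $t+(t-2)=2(t-1)$ and meeting the required lower bound. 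To turn a surviving center of degree $2(t-1)$ into a non-claw, at least $(2(t-1))-(t-1)=t-1$ of its neighbours must be deleted; since only $t-2<t-1$ leaves are available, at least one vertex of $e$ itself must be deleted. Hence for any feasible $S$ the trace $S\cap V$ is a hypergraph vertex cover with $|S\cap V|\le|S|$, which is how a solution maps back.

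It remains to check that the reduction is gap-preserving. In the forward direction, deleting every leaf together with a minimum hypergraph cover is feasible and has size $(t-2)m+\OPT(G)$, so on bounded-occurrence instances $\OPT(G')=\Theta(\OPT(G))$. In the backward direction I must control, for a near-optimal $S$, how many leaves are deleted per hyperedge against the shared savings obtained by deleting high-occurrence vertices of $V$, so that $|S\cap V|-\OPT(G)$ is bounded by a constant times the excess $|S|-\OPT(G')$. This leaf-versus-vertex trade-off is the main obstacle: unlike the exact correspondence in Theorem~\ref{thm1}, the optimum here genuinely mixes cheap private-leaf deletions with shared cover deletions, and I expect the bulk of the work to lie in showing that any such mixture can be rearranged, without increasing its cost, into the canonical form (all leaves of a hyperedge deleted, plus a genuine cover), thereby pinning down the $L$-reduction constants. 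Combined with membership in $\APX$, this yields $\APX$-completeness.
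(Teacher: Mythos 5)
Your membership half is fine: for membership in $\APX$ the paper simply invokes the known factor-$t$ algorithm, and the $2$-approximation machinery you sketch is what the paper develops separately in its final section (note, though, that the hypothesis $d(v)\geq 2(t-1)$ is not what makes $f_t$ a $2$-polymatroid --- that holds unconditionally --- it is what bounds the primal--dual ratio $\theta$ by $2$). The genuine gap is in your hardness half, and it is not merely an unfinished detail: the rearrangement lemma you defer to (``any mixture can be rearranged, without increasing its cost, into the canonical form: all leaves of a hyperedge deleted, plus a genuine cover'') is false. Take the bounded-occurrence $t$-uniform instance with hyperedges $e_1=\{x_1,\dots,x_{t-1},z_1\}$ and $e_2=\{x_1,\dots,x_{t-1},z_2\}$, so $\OPT(G)=1$. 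In your construction each center of $e_i$ is adjacent to the $t$ vertices of $e_i$ and to the $t-2$ leaves of $L_{e_i}$. Deleting $\{x_1,\dots,x_{t-1}\}$ is feasible (every surviving center keeps exactly $1+(t-2)=t-1$ neighbours) and costs $t-1$, whereas every canonical solution costs at least $2(t-2)+1=2t-3>t-1$. So $\OPT(G')=t-1$ is attained only by a non-canonical solution, whose trace on $V$ is a cover of size $t-1=(t-1)\OPT(G)$. Taking $N$ disjoint copies of this gadget gives a bounded-occurrence instance with $\OPT(G)=N$ and $\OPT(G')=(t-1)N$ in which the inequality an $L$-reduction requires, $|S\cap V|-\OPT(G)\leq\beta\,\bigl(|S|-\OPT(G')\bigr)$, fails with right-hand side $0$ and left-hand side $(t-2)N$: even an \emph{exactly} optimal solution of $G'$ maps back to a $(t-1)$-approximate cover. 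The defect is structural: a deleted hypergraph vertex is shared by every hyperedge containing it while your leaves are private to one hyperedge, so overdeleting shared vertices can beat paying for leaves, and no recovery map based on $S\cap V$ can restore the lost optimality.

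This is precisely the trap the paper's reduction is built to avoid: it reduces from vertex cover on $t$-regular \emph{graphs} and replaces your private leaf blocks by one global pad $P$ adjacent to all centers, so that saving $j$ pad vertices forces every edge gadget to absorb $j$ extra deletions simultaneously; an exchange argument (a multiset hitting every edge $1+j$ times, minus any $j$ elements, still hits every edge) then shows skimping on $P$ never pays, pinning the optimum at $|P|+\OPT(G)$. (As an aside, this balance is delicate: for odd $t$ the paper takes $|P|=t-1$, which equals the per-center deletion requirement, so deleting $P$ alone is already feasible and the paper's own claimed equivalence breaks there --- only the even-$t$ case, $|P|=t-2$, has the right economics.) To salvage your route you would have to make your padding global as well, or otherwise guarantee that the per-hyperedge deficit of $t-1$ deletions can never be met more cheaply through extra shared vertices, and then re-derive the $L$-reduction constants; as proposed, the reduction does not transfer hardness at all.
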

\begin{proof}
It is known that for $t \geq 3$, \texttt{Min-$t$-OSBCD} is approximable within a factor of $t$. Hence it remains to prove that it is $\APX$-hard. We prove this by an $L$-reduction \cite{papaYa} from Minimum Vertex Cover for $t$ regular graphs. Given a graph $G=(V, E)$, an instance of minimum vertex cover, in polynomial time we construct bipartite graph $H=(A \cup B, E')$, an instance of \texttt{Min-$t$-OSBCD} as follows. Here we assume that $|V| = n$ is sufficiently larger than $t$.
 \begin{itemize}
  \item[.] We introduce two vertex sets $V$ and $E$ in $H$. For each edge $e=(u, v)$ introduce the edges $(e, u)$ and $(e, v)$.
  \item[.] We make $x=\lfloor \frac{t}{2} \rfloor -1$ copies $V^1, V^2, \ldots, V^x$ of $V$. For each edge $e$ in $G$, we introduce the edges $(e, u^i), (e, v^i)$, for $1 \leq i \leq x$. Here $u^i \in V^i$ is the $i$th copy of $u \in V$.
  \item[.] We make $2n-1$ number of copies $E^1, E^2, \ldots, E^{2n-1}$ of $E$ and introduce the edges between these added sets of vertices to the vertices in $V, V^1, V^2, \ldots, V^x$ as follows. If $e^i \in E^i$ then join this vertex to all the neighbors of $e \in E$ in $V \cup V^1 \ldots \cup V^x$.
  \item[.] Finally, we make a new set $P$ having $t-2$ or $t-1$ vertices depending on whether $t$ is even or odd, respectively. We make each vertex in $P$ adjacent to every vertex in $E \cup E^1 \cup \ldots E^{2n-1}$.
\end{itemize}
\begin{figure}[h]
  \begin{center} 
    \includegraphics[scale=.4]{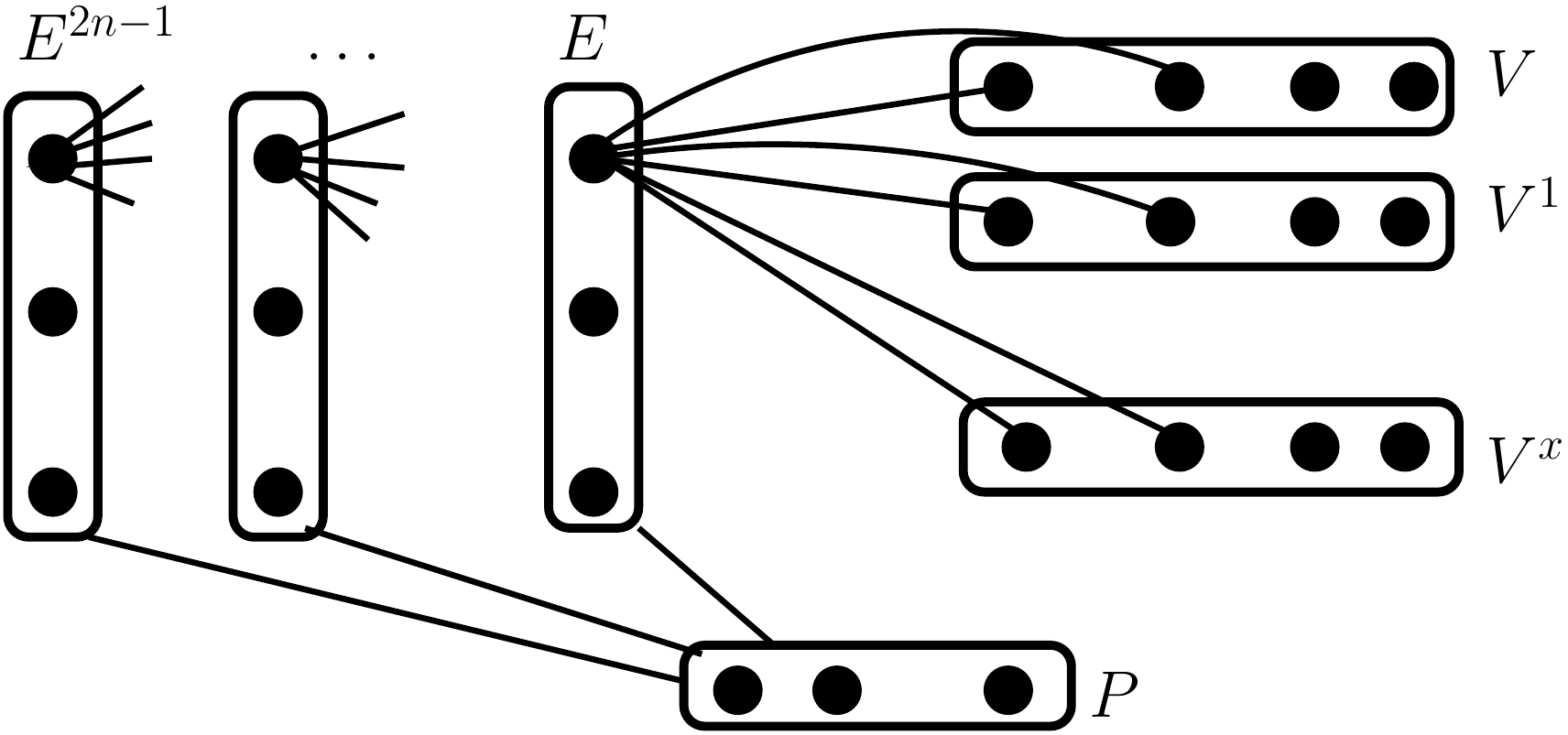}
      \end{center}
 \caption{An illustration of $H$.}
\label{fig1}
\end{figure}
It is easy to observe that $H$ is a bipartite graph with vertex bipartition as $A = E \cup E^1 \cup \ldots, E^{2n-1}$ and $B = V \cup V^1 \cup \ldots V^x\cup P$. In the graph $H$, it is important to observe that degree of each vertex in $A$ is $2(t-1)$. We also assume that the size of a minimum vertex cover $S_o$ in $G$ is larger than $|P|$. 
 
Now, we claim that size of a minimum vertex cover in $G$ is $k$ if and only if the size of a minimum $t$-\texttt{OSBCD} in $H$ is $|P| + k$.
 
Given a minimal vertex cover $S$ in $G$, we define $S' = S \cup P$. $S'$ is a $t$-\texttt{OSBCD} deletion set in $H$, because after deletion of vertex set $S'$ from $H$, degree of each vertex in $A$ are of at most $t-1$. Here, $|S'| = |P| + |S|$.

Let $S'$ be a minimal $t$-\texttt{OSBCD} set in $H$. If $S'$ contains a vertex $y$ from $A$ then all its $2n-1$ copies must be in $S'$, because $y$ has at least $t$ neighbors, not in $S'$. This would imply that $|S'| \geq 2n$ and hence $S'$ can not be a minimum $t$-\texttt{OSBCD} deletion set in $H$. Therefore, we shall assume that a minimal $t$-\texttt{OSBCD} set $S'$ in $H$ is a subset of $B$. With a similar argument, it can be observed that $S'$ contains $T$ and a subset of $V$ (because $n$ is sufficiently large than $t$). This would imply that $S=S' \cap V$ is a minimal vertex cover in $G$ and $|S'| = |P| + |S|$. 

From these observations, it follows that if $S'_o$ is a minimum $t$-\texttt{OSBCD} deletion set in $H$ then $|S'_o| = |P| +|S_o| \leq 2 |S_o|$. Also, from any minimal $t$-\texttt{OSBCD} deletion set $S'$ in $H$ we construct a minimal vertex cover $S$ in $G$ such that $|S'|= |P| + |S|$. This implies that it is a $L$-reduction with $\alpha =2$ and $\beta = 1$.  
\end{proof}

By using the reduction in the proof of Theorem \ref{thm2}, we have the following result.
\begin{theorem}
For split graphs, \texttt{Min-$t$-Claw-Del} is $\APX$-complete even when each vertex $v \in A$ has at least $2(t-1)$ neighbours in $B$.
\end{theorem}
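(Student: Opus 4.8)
The plan is to obtain both directions of $\APX$-completeness by composing the results already established: membership follows from the known constant-factor approximation for \texttt{Min-$t$-Claw-Del}, and hardness follows by feeding the hard instances of \autoref{thm3} through the value-preserving reduction of \autoref{thm2}, while checking that the degree constraint survives. For membership in $\APX$, recall from the introduction that \texttt{Min-$t$-Claw-Del} can be cast as a $(t+1)$-hitting set problem and is approximable within a factor of $t+1$ on arbitrary graphs; the same algorithm applies verbatim to split graphs satisfying the degree bound, so this restricted problem lies in $\APX$ and it remains only to prove $\APX$-hardness.

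For the hardness part, I would take a bipartite instance $G=(A\cup B,E)$ of \texttt{Min-$t$-OSBCD} with $d(v)\ge 2(t-1)$ for every $v\in A$, which is exactly the class shown $\APX$-hard in \autoref{thm3}, and apply the construction of \autoref{thm2}: form the split graph $H=(A\cup B,E')$ with $E'=E\cup\{(u,v)\mid u,v\in A\}$, so that $A$ induces a clique and $B$ an independent set. The crucial observation is that the only edges added are among vertices of $A$, so each $v\in A$ retains precisely its $B$-neighbourhood from $G$; hence every $v\in A$ still has at least $2(t-1)$ neighbours in $B$ in $H$, which is exactly the degree restriction demanded in the statement.

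Finally, I would invoke the equivalence recorded in \autoref{thm2}: a vertex set $S$ is a one-sided $t$-claw deletion set in $G$ if and only if it is a $t$-claw deletion set in the split graph $H$. Consequently the two problems have identical feasible solutions and identical optimum values on these instances, so $G\mapsto H$ is an $L$-reduction with parameters $\alpha=\beta=1$. Composing it with the $L$-reduction of \autoref{thm3} from minimum vertex cover on $t$-regular graphs yields an $L$-reduction into \texttt{Min-$t$-Claw-Del} on split graphs with the stated degree bound; since $L$-reductions compose and preserve $\APX$-hardness, this establishes the claim. I expect the only point requiring care to be the verification that the condition $d(v)\ge 2(t-1)$ is preserved, rather than any subtlety in the reduction itself—and this is immediate, because \autoref{thm2} introduces no new $A$--$B$ edges and acts as the identity on the vertex set.
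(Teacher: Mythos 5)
Your proposal is correct and is essentially the paper's own proof: the paper disposes of this theorem in a single line, "by using the reduction in the proof of Theorem~\ref{thm2}," i.e.\ by feeding the $\APX$-hard instances of Theorem~\ref{thm3} through the clique-completion reduction, exactly as you do. The details you supply---that adding only $A$--$A$ edges leaves each vertex's $B$-neighbourhood (and hence the $d(v)\geq 2(t-1)$ condition) intact, that the solution-preserving equivalence of Theorem~\ref{thm2} makes this an $L$-reduction with $\alpha=\beta=1$, and that membership in $\APX$ follows from the known constant-factor approximation---are precisely what the paper leaves implicit.
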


\section{Approximation Algorithms}
In this section, we define a matroid $M_t=(E, {\mathcal F})$ on a bipartite graph $G=(A \cup B, E)$ associated with one-sided $t$-claws in $G$. Given a positive integer $t \geq 3$, we define ${\mathcal F}$ as the collection of edge sets $F\subseteq E$ such that the subgraph $(A\cup B, F)$ has no $K_{1, t}$ with the center vertex in $A$. In other words, if $F$ is an independent set in the matroid $M_t$ then the subgraph $(A\cup B, F)$ of $G$ can have a vertex in $B$ with degree larger than $(t-1)$, but no vertex in $A$ can have degree larger than $(t-1)$. Clearly, $M_t$ is a hereditary system. Next, we  show that for any pair of independent sets $F_1$ and $F_2$ in $M_t$, if $|F_1| < |F_2|$ then there exists an element $e \in F_2\setminus F_1$ such that $F_1 \cup \{e\}$ is an independent set in $M_t$. Since $(A\cup B, F_1)$ and $(A\cup B, F_2)$ are bipartite graphs and $|F_1| < |F_2|$, there must exist a vertex $v \in A$ such that $d_{F_1}(v) < d_{F_2}(v)$. Now, it is easy to observe that $F_1 \cup \{e\}$ is an independent set, for each edge $e \in F_2 \setminus F_1$ incident on $v$. Hence, $M_t$ is a matroid.

Given a bipartite graph $G=(A\cup B, E)$ and a positive inetger $t \geq 3$, we define a function $f_t : 2^E \rightarrow Z_+$ such that $$f_t(F) = 2\left[\sum_{v\in A}\mbox{min}\{(t-1), d_F(v)\}\right].$$ 
We will show that the function $f_t$ corresponds to the 2-polymatroid function corresponding to \texttt{Min-$t$-OSBCD}. 

\begin{lemma}
 For any bipartite graph $G=(A \cup B, E)$, $(E, f_t)$ is a 2-polymatroid. Also, $F \subseteq E$ is a matching in $(E, f_t)$ if and only if $(A\cup B, f)$ is one-sided $t$-claw free.
\end{lemma}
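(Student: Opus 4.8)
The plan is to verify the polymatroid axioms for $(E, f_t)$ one by one and then to read off the matching characterization from the bipartite degree identity. The easy axioms come first. Since $d_\emptyset(v) = 0$ for every $v \in A$, we have $f_t(\emptyset) = 0$, and $f_t$ is manifestly integer-valued. Monotonicity is immediate: if $F_1 \subseteq F_2$ then $d_{F_1}(v) \leq d_{F_2}(v)$ for each $v$, so $\min\{t-1, d_{F_1}(v)\} \leq \min\{t-1, d_{F_2}(v)\}$ termwise, whence $f_t(F_1) \leq f_t(F_2)$. For the $2$-bound, a single edge $e=(u,v)$ with $u \in A$ contributes $d_{\{e\}}(u)=1$ and leaves every other vertex of $A$ at degree $0$; since $t \geq 3$ gives $\min\{t-1,1\}=1$, we obtain $f_t(\{e\}) = 2 \leq 2$.

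The crux is submodularity, which I would establish through the diminishing-returns inequality rather than the $\min/\max$ form. Fix $F \subseteq E$ and an edge $e = (u,v) \notin F$ with $u \in A$. Adding $e$ raises only $d_F(u)$ by one, so the marginal gain is
\[
f_t(F \cup \{e\}) - f_t(F) = 2\bigl(\min\{t-1, d_F(u)+1\} - \min\{t-1, d_F(u)\}\bigr),
\]
which equals $2$ when $d_F(u) \leq t-2$ and $0$ when $d_F(u) \geq t-1$. In particular this marginal gain is a non-increasing function of $d_F(u)$. Hence for $F_1 \subseteq F_2$ with $e \notin F_2$ we have $d_{F_1}(u) \leq d_{F_2}(u)$, so the marginal gain at $F_1$ is at least the marginal gain at $F_2$. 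This is exactly the diminishing-returns characterization of submodularity, so $f_t$ is submodular and $(E, f_t)$ is a $2$-polymatroid. I expect this truncation-monotonicity step to be the only real point to get right; everything reduces to the fact that each single-vertex summand $\min\{t-1, d_F(v)\}$ is a concave non-decreasing function composed with the modular map $F \mapsto |F \cap \delta(v)|$, and a nonnegative sum of such functions stays submodular.

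For the matching characterization I would exploit that $G$ is bipartite: every edge has exactly one endpoint in $A$, so $\sum_{v \in A} d_F(v) = |F|$ for every $F \subseteq E$. Consequently
\[
2|F| - f_t(F) = 2\sum_{v \in A}\bigl(d_F(v) - \min\{t-1, d_F(v)\}\bigr),
\]
a sum of nonnegative terms. Thus $F$ is a matching in $(E, f_t)$, i.e. $f_t(F) = 2|F|$, if and only if each term vanishes, i.e. $d_F(v) \leq t-1$ for every $v \in A$. The latter says precisely that no vertex of $A$ is the center of a $K_{1,t}$ in $(A \cup B, F)$, which is the definition of $(A \cup B, F)$ being one-sided $t$-claw free, completing the equivalence.
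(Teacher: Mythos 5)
Your proof is correct, and it diverges from the paper's in a way worth noting. For submodularity the paper verifies the local pairwise-exchange inequality $f_t(X \cup \{e_1\}) + f_t(X \cup \{e_2\}) \geq f_t(X \cup \{e_1, e_2\}) + f_t(X)$ by a case analysis on whether $e_1, e_2$ share their $A$-endpoint and on the degree of that endpoint in $X$; you instead use the diminishing-returns characterization on nested sets $F_1 \subseteq F_2$. The underlying observation is the same in both--the marginal gain of an edge $e=(u,v)$ with $u \in A$ is $2$ or $0$ according as $d_F(u) \leq t-2$ or $d_F(u) \geq t-1$, hence non-increasing in the degree--but your route compresses the casework into a single monotonicity statement and, via the remark that each summand is a concave non-decreasing function of a modular map, makes clear why the argument generalizes. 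It also sidesteps a small slip in the paper's cases: when $e_1$ and $e_2$ have distinct $A$-endpoints the exchange inequality holds with \emph{equality} regardless of the degrees (the contributions are independent), not ``as a strict inequality'' as the paper claims when a degree reaches $t-1$; this is harmless since only the weak inequality is needed, but your formulation avoids the issue entirely. Your write-up is also more complete on two points the paper leaves implicit: you actually check the $2$-polymatroid bound $f_t(\{e\}) \leq 2$, which the paper never does, and you prove the matching equivalence from the bipartite identity $\sum_{v \in A} d_F(v) = |F|$, writing $2|F| - f_t(F)$ as a sum of nonnegative terms, where the paper merely asserts that this part ``follows directly from the definition.'' In short, both proofs rest on the same truncated-degree structure, but yours is the tighter and more self-contained verification.
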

\begin{proof}
 It is easy to observe that $f_t$ is a non-decreasing, integer-valued function with $f_t(\emptyset)=0$. Now, it remains to show that $f_t$ is a submodular function and we will prove that $$f_t(X \cup \{e_1\}) + f_t(X \cup \{e_2\}) \geq f_t(X \cup \{e_1, e_2\}) + f_t(X),$$ for every $X \subseteq E$ and $e_1, e_2 \in (E \setminus X)$. 
 
Let $X$ be any subset of $E$ and $e_1, e_2$ be two edges in $E\setminus X$. If both $e_1$ and $e_2$ are incident on a vertex $v \in A$ with $d_X(v) \leq t-3$ then $f_t$ satisfies the above inequality as equality. If $d_X(v) = t-2$, then it is satisfied with strict inequality. If $d_X(v) \geq t-1$ then it is satisfied as equality.

Let us assume that $e_1$ is incident on $u \in A$ and $e_2$ is incident on $v \in A$. If $d_X(u)$ and $d_X(v)$ are less than $t-1$, then the above inequality is satisfied as an equality, otherwise, it is satisfied as a strict inequality.

The second part of the proposition follows directly from the definition of matching and $f_t$.  
\end{proof}

We assume that $G$ has no vertex $v \in A$ with $d(v) \leq (t-1)$. This is because such a vertex does not generate a $t$-claw with a center vertex in $A$, and the edges incident on $v$ has no impact on its neighbors in $B$.
\begin{lemma}
 Let $t \geq 3$ be a fixed positive integer and $G=(A \cup B, E)$ be a bipartite graph with $d(v) \geq t-1$ for all $v \in A$. Then 
 \begin{itemize}
\item[(a)] $f_t(E) = 2[\sum_{v\in A}\mbox{min}\{(t-1), d_E(v)\}]= 2|A|(t-1).$
\item[(b)] $f_t(e) = 2$, for each $e \in E$.
\item[(c)] $f_t^d(E) = \sum_{e \in E}f_t(e) - (f_t(E) - f_t(\emptyset))= 2|E| - f_t(E) = 2|E| - 2|A|(t-1) \\ = 2\sum_{v \in A}[d(v) - t +1]$.
\item[(d)] For any $F \subseteq E$,  
\begin{eqnarray*} 
f_t^d(F) & = & \sum_{e \in F}f_t(e) - (f_t(E) - f_t(E \setminus F)) \\
         & = & 2|F| - 2(\sum_{v \in A}\mbox{min}\{t-1, d(v)\} - \sum_{v \in A}\mbox{min}\{t-1, d_{E\setminus F}(v)\}) \\
         & = & 2\sum_{v \in A}d_F(v) - 2\sum_{v \in A}\mbox{max}\{0, \mbox{min}
         \{t-1, d(v)\} - d_{E \setminus F}(v)\} \\
         & = & 2 \sum_{v \in A}\mbox{min}\{d_F(v), d(v) - t+1\}.
\end{eqnarray*}
 \item[(e)] If $v \in B$ then $N(v) \subseteq A$ and
 \begin{eqnarray*}
  f^d_t(\delta(v)) & = & \sum_{u \in \delta(v)}f_t(\{u\}) - [f_t(E) - f_t(E \setminus \delta(v))] \\
                   & = & 2\sum_{u \in A}\mbox{min}\{d_{\delta(v)}(u), d(u) -t +1\} \\
                   & = & 2d(v).
 \end{eqnarray*}
Last step follows from the fact that $d_{\delta(v)}(u) = 1$ if $u$ is a neighbour of $v$ and $d_{\delta(v)}(u) = 0$ if $u \in A$ is not a neighbour of $v$.

\item[(f)] If $v \in A$ then $N(v) \subseteq B$ and 
  \begin{eqnarray*}
  f^d_t(\delta(v)) & = & 2[d(v) -t +1].
 \end{eqnarray*}
\end{itemize}
\end{lemma}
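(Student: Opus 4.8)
The plan is to derive all six items from the dual formula
$f_t^d(S) = \sum_{j \in S} f_t(\{j\}) - \bigl(f_t(E) - f_t(E \setminus S)\bigr)$,
so that parts (c)--(f) all reduce to the single per-vertex identity established in (d). I would first settle (a) and (b), since these are the only facts about $f_t$ itself that the remaining parts use. For (a), the hypothesis $d_E(v) = d(v) \geq t-1$ for every $v \in A$ makes each term $\min\{t-1, d_E(v)\}$ collapse to $t-1$, and summing over $A$ gives $f_t(E) = 2|A|(t-1)$. For (b), a single edge $e=(u,w)$ with $u \in A$ contributes $d_{\{e\}}(u)=1$ and $d_{\{e\}}(v)=0$ for every other $v \in A$; since $t \geq 3$ forces $t-1 \geq 2 > 1$, the active minimum is $1$, whence $f_t(\{e\}) = 2$.

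Part (c) is then immediate: substituting (b) into the dual formula gives $\sum_{e \in E} f_t(\{e\}) = 2|E|$ and $f_t(\emptyset)=0$, so $f_t^d(E) = 2|E| - f_t(E)$; plugging in (a) and using that in a bipartite graph every edge meets $A$ in exactly one endpoint, hence $|E| = \sum_{v \in A} d(v)$, rewrites this as $2\sum_{v \in A}(d(v)-t+1)$.

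The crux, and the step I expect to be the main obstacle, is (d). After substituting (b) and the explicit form of $f_t$ on $E$ and on $E\setminus F$, the whole expression splits as a sum over $v \in A$ (using $|F| = \sum_{v\in A} d_F(v)$), so it suffices to verify for each fixed $v \in A$ the scalar identity
\[ d_F(v) - \min\{t-1, d(v)\} + \min\{t-1, d_{E\setminus F}(v)\} = \min\{d_F(v),\, d(v) - t + 1\}. \]
Writing $a = d_F(v)$ and $b = d_{E\setminus F}(v)$, so that $a+b = d(v) \geq t-1$ and $\min\{t-1,d(v)\} = t-1$, the left-hand side becomes $a - (t-1) + \min\{t-1, b\}$. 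I would finish with a two-case split: if $b \geq t-1$ the left-hand side equals $a$, while $d(v)-t+1 = a+b-t+1 \geq a$ makes the right-hand side $a$ as well; if $b < t-1$ the left-hand side equals $a+b-(t-1) = d(v)-t+1$, which is then the smaller of the two arguments of the minimum. The intermediate $\max$-form displayed in the statement is the same identity repackaged (it holds because $d(v) \geq d_{E\setminus F}(v)$) and is checked by the identical case analysis.

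Finally, (e) and (f) are specializations of (d) with $F = \delta(v)$. For $v \in B$ every edge of $\delta(v)$ runs into $A$, so $d_{\delta(v)}(u) = 1$ precisely when $u \in N(v)$ and $0$ otherwise; invoking the standing assumption that no $A$-vertex has degree $\leq t-1$, i.e. $d(u) \geq t$, makes $\min\{1, d(u)-t+1\} = 1$ for each of the $d(v)$ neighbours, yielding $f_t^d(\delta(v)) = 2d(v)$. For $v \in A$ every edge of $\delta(v)$ runs into $B$, so the only nonzero term of (d) is at $u=v$ with $d_{\delta(v)}(v)=d(v)$, giving $\min\{d(v),\, d(v)-t+1\} = d(v)-t+1$ and hence $f_t^d(\delta(v)) = 2(d(v)-t+1)$.
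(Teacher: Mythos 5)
Your proof is correct and follows essentially the same route as the paper: the paper gives no separate proof of this lemma, and the equality chains embedded in items (c)--(e) of the statement are exactly the derivation you carry out, with your two-case scalar analysis for (d) supplying the justification the paper leaves implicit. One genuine merit of your write-up is spotting that the last equality $f_t^d(\delta(v)) = 2d(v)$ in (e) needs $d(u) \geq t$ for every $u \in N(v)$ --- i.e., the paper's standing assumption that no vertex of $A$ has degree at most $t-1$ --- since under the lemma's literal hypothesis $d(u) \geq t-1$ a neighbour with $d(u) = t-1$ would contribute $\min\{1, d(u)-t+1\} = 0$ rather than $1$.
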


Based on these properties, \texttt{Min-$t$-OSBCD} can be formulated as a submodular set cover problem. By following the submodular set cover formulation of a node deletion problem for a graph property satisfying matroidal property \cite{fujitoM}, \texttt{Min-$t$-OSBCD} can be formulated as given below. Here, we consider the vertex weighted version of \texttt{Min-$t$-OSBCD}.
\begin{eqnarray*}
 (P):  Minimize & \sum_{v \in V} w(v) x_v  & ~~~\\
             Subject~ to & \sum_{v \in S} f_t^d(\delta_S(v))x_v \geq f_t^d(E[S]),  & \forall S \subseteq V \\
                        & x_v \in \{0, 1\},  & \forall v \in V. 
\end{eqnarray*}

The dual $(D)$ of a linear program relaxation of $(P)$ can be written as follows.
\begin{eqnarray*}
 (D):  Maximize & \sum_{S \subseteq V} f_t^d(E[S])y_S  & ~~~\\
             Subject~ to & \sum_{S:v \in S} f_t^d(\delta_S(v))y_S \leq w_v,  & \forall v \in V \\
                        & y_S \geq 0,  & \forall S \subseteq V. 
\end{eqnarray*}
Based on these primal-dual formulations of \texttt{Min-$t$-OSBCD} a primal-dual approximation algorithm can be designed and it is described in Algorithm-\ref{alg1}.

\begin{algorithm2e}[H] \label{alg1}
\KwIn{A bipartite graph $G=(A \cup B,E)$ and $w:A\cup B \rightarrow \mathbb{Q}^+$;}
\KwOut{A $t$-\texttt{OSBCD} deletion set $F$;}
\vspace*{.1cm}
$F' = \emptyset$; $S = V$; $y= 0$\;
\While{$F'$ is not a $t$-\texttt{OSBCD} set in $G$}{
  Increase $y_S$ until, for some $v \in S$, the dual constraint in $(D)$ for $v$ becomes tight\;
  $F' = F' \cup \{v\}$\;
  $S = S \setminus \{v\}$\;
}
Compute a minimal $t$-\texttt{OSBCD} set $F \subseteq F'$ by using reverse deletion method on $F'$\;
return $(F)$\;
\caption{{Approximation Algorithm for \texttt{Min-$t$-OSBCD}}}
\end{algorithm2e}

The performance of this primal-dual algorithm is described in the following Lemma.
\begin{lemma}
 The performance ratio of Algorithm-\ref{alg1} is bounded above by 
 $${\displaystyle \theta = \mbox{max}\{\frac{\sum_{v \in S}f_t^d(\delta(v))}{f_t^d(E)} \mid S \mbox{~is a minimal~} t\mbox{-OSBCD set in~} G\}}.$$
\end{lemma}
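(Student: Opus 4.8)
The plan is to carry out the standard primal--dual accounting for submodular set cover in the Goemans--Williamson--Fujito style and then reduce the whole guarantee to a single combinatorial local ratio that $\theta$ is tailored to dominate.

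First I would fix two invariants maintained by Algorithm-\ref{alg1}. The dual vector $y$ is feasible for $(D)$ at all times: the loop only raises $y_S$ for the current residual set $S$ and halts the instant a dual constraint turns tight, so no constraint is ever violated. Moreover, every vertex $v$ placed in $F'$ has a tight constraint at the moment of insertion, and since all later increases are made on residual sets contained in $S\setminus\{v\}$ (hence not containing $v$), the left-hand side of $v$'s constraint never changes again. Thus at termination $w_v=\sum_{S:\,v\in S} f_t^d(\delta_S(v))\,y_S$ for every $v\in F'$, and in particular for every $v\in F\subseteq F'$.

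Next I would invoke weak duality. Because $(D)$ is dual to an LP relaxation of the valid formulation $(P)$ of \texttt{Min-$t$-OSBCD}, any feasible $y$ satisfies $\sum_{S\subseteq V} f_t^d(E[S])\,y_S\le\OPT$. Summing the tightness identity over $F$ gives
$$w(F)=\sum_{v\in F} w_v=\sum_{v\in F}\sum_{S:\,v\in S} f_t^d(\delta_S(v))\,y_S=\sum_{S\subseteq V} y_S\!\!\sum_{v\in F\cap S}\!\! f_t^d(\delta_S(v)).$$
Hence it suffices to establish, for every $S$ with $y_S>0$, the local inequality
$$\sum_{v\in F\cap S} f_t^d(\delta_S(v))\le\theta\, f_t^d(E[S]),$$
since then $w(F)\le\theta\sum_S f_t^d(E[S])\,y_S\le\theta\cdot\OPT$, which is the claimed ratio.

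The heart of the argument --- and the step I expect to be the main obstacle --- is this local inequality. The sets carrying positive dual value are precisely the nested residual sets $S_1=V\supsetneq S_2\supsetneq\cdots$ generated by the while loop, so for each of them I must show that $F\cap S_j$ acts as a \emph{minimal} one-sided-$t$-claw deletion set for the induced instance $G[S_j]$. This is exactly what the reverse-deletion postprocessing is for: inspecting the vertices of $F'$ in reverse order of insertion and discarding each one whose removal keeps the set feasible produces, by the usual nestedness argument, an $F$ for which $F\cap S_j$ is minimal in $G[S_j]$ for every $j$. Granting this minimality, the fraction $\frac{\sum_{v\in F\cap S_j} f_t^d(\delta_{S_j}(v))}{f_t^d(E[S_j])}$ is a ratio of exactly the form defining $\theta$, but evaluated inside the bipartite subgraph $G[S_j]$; the delicate point is to bound it by the full-graph quantity $\theta$, which I would do by observing that each $G[S_j]$ is itself a bipartite \texttt{Min-$t$-OSBCD} instance whose minimal-cover ratios are no larger than $\theta$ (that is, reading $\theta$ as the worst local ratio over induced subgraphs, the reading under which the subsequent bounds $\theta\le t$ and $\theta\le2$ are applied). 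Substituting the local bound back into the displayed identity then yields $w(F)\le\theta\cdot\OPT$.
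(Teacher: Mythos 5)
Your proposal is correct, and it is essentially the argument the paper itself relies on: the paper states this lemma with no proof at all, implicitly appealing to Fujito's primal--dual analysis of submodular set cover \cite{fujitoM}, and your steps (dual feasibility of $y$ throughout the loop, tightness of the constraint of each $v\in F$ at insertion time and its invariance afterwards, weak duality against $(D)$, the decomposition of $w(F)$ over the nested residual sets, and reverse-delete nestedness giving minimality of $F\cap S_j$ for the residual instance $G[S_j]$) are exactly that standard analysis written out. The one substantive point you flag is real and worth stressing: what this argument actually bounds the ratio by is the maximum of $\sum_{v\in X}f_t^d(\delta_{S}(v))/f_t^d(E[S])$ taken over the residual induced subgraphs $G[S]$ and minimal one-sided $t$-claw deletion sets $X$ of $G[S]$, not the quantity $\theta$ as literally displayed, which ranges only over minimal solutions of $G$ itself with full-graph numerator $\sum_{v\in S}f_t^d(\delta(v))$ and denominator $f_t^d(E)$; there is no a priori reason the induced-subgraph ratios are dominated by the full-graph one. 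Your fix --- reading $\theta$ hereditarily, as the worst local ratio over all induced subgraph instances --- is the reading under which the lemma is true and under which the paper's later bounds are meant to be invoked. Note that this is not merely cosmetic: the hypothesis $d(v)\ge 2(t-1)$ for all $v\in A$, used later to prove $\theta\le 2$, is not inherited by induced subgraphs, so the subsequent application of this lemma silently requires exactly the care you exercise here.
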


Next, we prove that $\theta = 2$ under some assumption on the number of edges in $G$.

\begin{lemma}
 Let $t \geq 3$ be a positive integer and $G=(A \cup B, E)$ be a bipartite graph with $d(v) \geq 2(t-1)$, for all $v \in A$. For any minimal $t$-\texttt{OSBCD} set $X$ in $G$, $$2f_t^d(E) \geq \sum_{v \in X}f_t^d(\delta(v)).$$
\end{lemma}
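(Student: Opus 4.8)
The plan is to unfold both sides using parts (c), (e), and (f) of the preceding lemma, turning the claimed submodular inequality into an elementary degree-counting statement. Writing $X = X_A \cup X_B$ with $X_A = X \cap A$ and $X_B = X \cap B$, part (c) gives $2 f_t^d(E) = 4 \sum_{u \in A}(d(u) - t + 1)$, while parts (f) and (e) give
$$\sum_{v \in X} f_t^d(\delta(v)) = 2 \sum_{u \in X_A}(d(u) - t + 1) + 2 \sum_{v \in X_B} d(v).$$
Hence the lemma is equivalent to
$$\sum_{u \in X_A}(d(u) - t + 1) + \sum_{v \in X_B} d(v) \leq 2 \sum_{u \in A}(d(u) - t + 1). \quad (\star)$$

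Next I would rewrite the only mixed term $\sum_{v \in X_B} d(v)$ purely from the $A$-side. For $u \in A$ let $d'(u) = |N(u) \setminus X_B|$ be the number of neighbours of $u$ surviving the deletion of $X_B$. Double-counting the edges between $A$ and $X_B$ (each such edge joins a vertex of $X_B \subseteq B$ to a vertex of $A$) yields $\sum_{v \in X_B} d(v) = \sum_{u \in A}\bigl(d(u) - d'(u)\bigr)$. Substituting this into $(\star)$ and collecting terms, using $|A| = |A_2| + |X_A|$ where $A_2 = A \setminus X_A$, reduces $(\star)$ to the single inequality
$$\sum_{u \in A_2}\bigl(d(u) + d'(u) - 2(t-1)\bigr) + \sum_{u \in X_A}\bigl(d'(u) - (t-1)\bigr) \geq 0. \quad (\diamondsuit)$$

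I would then establish $(\diamondsuit)$ by showing each sum is non-negative termwise. For $u \in A_2$ the hypothesis $d(u) \geq 2(t-1)$ together with $d'(u) \geq 0$ gives $d(u) + d'(u) - 2(t-1) \geq 0$ at once. For $u \in X_A$ I would invoke minimality of $X$: since $X \setminus \{u\}$ is not a $t$-\texttt{OSBCD} set, the graph $G[(A \cup B) \setminus (X \setminus \{u\})]$ contains a $t$-claw centred in $A$, whereas $G[(A \cup B) \setminus X]$ contains none. Because $G$ is bipartite and $u \in A$, reinserting $u$ raises no other $A$-vertex's degree, so the newly created claw must be centred at $u$ itself; its $t$ leaves are surviving neighbours of $u$, forcing $d'(u) \geq t$ and hence $d'(u) - (t-1) \geq 1 > 0$. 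With both sums non-negative, $(\diamondsuit)$ holds, and the lemma follows.

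The bookkeeping in the first two steps is routine; the real content sits in the last sign fact. The main obstacle — and the only place where both the bipartite structure and the degree hypothesis are genuinely used — is the minimality argument lower-bounding $d'(u)$ for deleted clique-side vertices $u \in X_A$. It is precisely the observation that every one-sided $t$-claw is centred in $A$, so reinserting an $A$-vertex can only spawn a claw at that vertex, which upgrades the trivial bound $d'(u) \geq 0$ to $d'(u) \geq t$; the hypothesis $d(u) \geq 2(t-1)$ then does the analogous job on the surviving side $A_2$.
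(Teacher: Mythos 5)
Your proof is correct and follows essentially the same route as the paper: both reduce the claim via parts (c), (e), (f) to the degree inequality $(\star)$, split the edges incident to $X_B$ according to whether their $A$-endpoint lies in $X_A$ or in $A \setminus X_A$, and then combine minimality (each deleted $A$-vertex retains at least $t$ surviving neighbours, your $d'(u) \geq t$) with the hypothesis $d(u) \geq 2(t-1)$ on the surviving side. Your termwise inequality $(\diamondsuit)$ is just a per-vertex restatement of the paper's aggregate bounds $|(X_A, X_B)| \leq \sum_{u \in X_A}(d(u)-t)$ and $|(Y_A, X_B)| \leq 2\sum_{u \in Y_A}(d(u)-t+1)$, and it has the merit of spelling out the minimality argument that the paper asserts without justification.
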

\begin{proof}
We will prove that $2f_t^d(E) - \sum_{v \in X}f_t^d(\delta(v)) \geq 0.$ 
\begin{eqnarray*}
 & & 2f_t^d(E) - \sum_{v \in X}f_t^d(\delta(v)) \geq 0 \\
 & \Leftrightarrow & 4|E| - 4(t-1)|A| -2 \sum_{v \in X_A}d(v) + 2(t-1)|X_A| - 2\sum_{v \in X_B}d(v) \geq 0 \\
 & \Leftrightarrow & \sum_{v \in X_A}[d(v) - t +1] + \sum_{v \in X_B}d(v) \leq 2[|E| -(t-1)|A|]
\end{eqnarray*}
We can write $\sum_{v \in X_B}d(v) = |(X_A, X_B)| + \sum_{v \in Y_A}[d(v) -t +1] + |(Y_A, X_B)| - \sum_{v \in Y_A}[d(v) -t +1].$ By using this relation in the above inequality, we have
\begin{eqnarray*}
 & & |(X_A, X_B)| + |(Y_A, X_B)| - \sum_{v \in Y_A}[d(v) - t +1] \\ & & \hspace*{4cm} \leq \sum_{v \in X_A}[d(v) - t +1] + \sum_{v \in Y_A}[d(v) - t +1] \\
 & \Leftrightarrow & |(X_A, X_B)| - \sum_{v \in X_A}[d(v) - t +1]  + |(Y_A, X_B)| \leq 2 \sum_{v \in Y_A}[d(v) - t +1].
\end{eqnarray*}

Since $X$ is a minimal deletion set $|(X_A, X_B)| < \sum_{v \in X_A}[d(v) - t +1]$. Therefore, it remains to prove $$|(Y_A, X_B)| \leq 2 \sum_{v \in Y_A}[d(v) - t +1].$$

Now, $d(v) \leq 2(d(v) -t + 1)$ as $d(v) \geq 2(t-1)$. This implies that $|(Y_A, X_B)| \leq \sum_{v \in Y_A}d(v) \leq 2\sum_{v \in Y_A}(d(v) -t+1)$.  
\end{proof}

From this Lemma, we have the following result.

\begin{theorem} \label{thm5}
 Let $t \geq 3$ be a positive integer and $G=(A \cup B, E)$ be a bipartite graph with $d(v) \geq 2(t-1)$, for all $v \in A$. Then Algorithm-\ref{alg1} approximates \texttt{Min-$t$-OSBCD} within a factor of 2.
\end{theorem}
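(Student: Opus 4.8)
The plan is to combine the two Lemmas that immediately precede this theorem, since together they deliver the statement almost at once. The performance-ratio Lemma guarantees that Algorithm-\ref{alg1} returns a solution whose weight is within a factor $\theta$ of the optimum, where
$$\theta = \mbox{max}\left\{\frac{\sum_{v \in S}f_t^d(\delta(v))}{f_t^d(E)} \mid S \mbox{ is a minimal } t\mbox{-OSBCD set in } G\right\}.$$
So it suffices to show $\theta \leq 2$ under the hypothesis $d(v) \geq 2(t-1)$ for every $v \in A$. Everything then reduces to reading the previous Lemma as a bound on each term of this maximum.

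First I would fix an arbitrary minimal $t$-\texttt{OSBCD} set $X$ and invoke the immediately preceding Lemma, which states precisely that $2f_t^d(E) \geq \sum_{v \in X}f_t^d(\delta(v))$ under this degree assumption. To turn this into a bound on the ratio I must divide by $f_t^d(E)$, so I would first check that this denominator is strictly positive: by part (c) of the earlier structural Lemma, $f_t^d(E) = 2\sum_{v \in A}[d(v)-t+1]$, and each summand is at least $t-1 \geq 2$ precisely because $d(v) \geq 2(t-1)$. Hence the division is legitimate and yields $\frac{\sum_{v \in X}f_t^d(\delta(v))}{f_t^d(E)} \leq 2$. Since $X$ was an arbitrary minimal deletion set, every term in the maximum defining $\theta$ is at most $2$, so $\theta \leq 2$, and substituting into the performance-ratio Lemma gives the factor-$2$ guarantee.

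Honestly, there is no genuine obstacle left in this theorem: all the real work sits in the preceding Lemma (the inequality for minimal deletion sets), and the present statement is just its packaging with the generic primal-dual performance bound. The two points I would be careful about are exactly the two I flagged above. One is the positivity of $f_t^d(E)$, which is what makes the passage from the additive inequality to the multiplicative ratio valid. The other, slightly subtler, is that the bound $\theta$ must apply to the \emph{actual output} of the algorithm; this is ensured because the final reverse-deletion step of Algorithm-\ref{alg1} returns a set $F$ that is a minimal $t$-\texttt{OSBCD} set, so the ratio bound established for minimal solutions indeed governs the returned set rather than merely the intermediate set $F'$ accumulated in the while-loop.
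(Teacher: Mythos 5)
Your proposal is correct and takes essentially the same route as the paper, which derives Theorem \ref{thm5} immediately by combining the performance-ratio bound $\theta$ from the primal-dual lemma with the preceding inequality $2f_t^d(E) \geq \sum_{v \in X}f_t^d(\delta(v))$ for minimal deletion sets. The two details you flag --- positivity of $f_t^d(E)$ (via $f_t^d(E) = 2\sum_{v \in A}[d(v)-t+1] > 0$) and the fact that the reverse-deletion step makes the returned set minimal so the bound applies to the actual output --- are left implicit in the paper but are exactly the right points to make the one-line deduction rigorous.
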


It is easy to observe that \texttt{Min-$t$-Claw-Del} for split graphs is $L$-reduciable to \texttt{Min-$t$-OSBCD} with $\alpha = 1$ and $\beta = 1$. Therefore, we have the following result.

\begin{theorem} \label{thm6}
\texttt{Min-$t$-Claw-Del} can be approximated within a factor of 2, where the input instance is a split graph $G=(A \cup B, E)$ with each vertex $v \in A$ having at least $2(t-1)$ neighbors in $B$.
\end{theorem}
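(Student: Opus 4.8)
The plan is to obtain Theorem~\ref{thm6} from Theorem~\ref{thm5} by running the split-to-bipartite correspondence of Theorem~\ref{thm2} in reverse. Given a split graph $H=(A\cup B,E')$ in which every $v\in A$ has at least $2(t-1)$ neighbours in $B$, I would delete the clique edges inside $A$ to form the bipartite instance $G=(A\cup B,E)$ with $E=E'\setminus\{(u,v)\mid u,v\in A\}$. This leaves the vertex set and the weights $w$ unchanged, so the feasible regions of the two problems are literally the same family of subsets of $A\cup B$. Moreover the degree of each $v\in A$ in $G$ equals its number of $B$-neighbours in $H$, hence is at least $2(t-1)$, which is exactly the hypothesis under which Theorem~\ref{thm5} certifies that Algorithm~\ref{alg1} is a factor-$2$ approximation for \texttt{Min-$t$-OSBCD} on $G$.

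The core of the argument is the claim that a set $S\subseteq A\cup B$ is a $t$-claw deletion set for $H$ if and only if it is a one-sided $t$-claw deletion set for $G$; granting this, the construction is an $L$-reduction with $\alpha=\beta=1$, the $\OPT$ values coincide, costs are preserved, and a factor-$2$ solution for $G$ pulls back to a factor-$2$ solution for $H$. One inclusion is routine: any one-sided $t$-claw of $G$ has its centre in $A$ and its $t$ leaves in $B$, and passing from $G$ to $H$ only inserts edges inside $A$, so none of the non-edges among the centre and its $B$-leaves is disturbed; the configuration is therefore still an induced $K_{1,t}$ in $H$. Consequently every set that destroys all $t$-claws of $H$ destroys all one-sided $t$-claws of $G$.

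The step I expect to be the main obstacle is the reverse inclusion, because this is exactly where the induced-claw definition can make the two problems diverge. A surviving $t$-claw of $H$ must be centred at some $v\in A$---a centre in $B$ would force $t\geq 3$ pairwise non-adjacent leaves inside the clique $A$, which is impossible---but its leaves need only form an independent set, so a priori one of them could be a clique vertex $a\in A$ alongside $t-1$ leaves in $B$. Such a mixed configuration is \emph{not} a one-sided claw of $G$, since the edge $va$ was removed, and a centre $v$ carrying exactly $t-1$ surviving $B$-neighbours together with a suitable clique leaf could in principle persist in $H$ even when $G$ is already one-sided $t$-claw free at $v$. The clean equivalence therefore rests on the convention---consistent with the guiding observation that every $t$-claw in a split graph is centred in $A$ and with the one-sided formulation that drives the whole development---that the $t$ leaves are taken on the independent side $B$; under that reading the reverse inclusion is immediate. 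Pinning down the placement of the leaves, and thereby certifying $\beta=1$, is the only non-routine ingredient of the proof.

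With the equivalence in hand, the remainder is bookkeeping. Algorithm~\ref{alg1} applied to $G$ returns, by Theorem~\ref{thm5}, a one-sided $t$-claw deletion set $F$ with $w(F)\leq 2\,\OPT(G)$; the equivalence makes $F$ a $t$-claw deletion set for $H$ of the same weight, and since $\OPT(G)=\OPT(H)$ we obtain $w(F)\leq 2\,\OPT(H)$. Equivalently, feeding the factor $\rho=2$ of Theorem~\ref{thm5} into the standard $L$-reduction bound $1+\alpha\beta(\rho-1)$ with $\alpha=\beta=1$ yields the approximation ratio $2$, which is precisely the assertion of Theorem~\ref{thm6}.
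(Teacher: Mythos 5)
Your proposal follows exactly the paper's route: the paper's entire ``proof'' of Theorem~\ref{thm6} is the one-sentence remark preceding its statement, asserting that \texttt{Min-$t$-Claw-Del} on split graphs is $L$-reducible to \texttt{Min-$t$-OSBCD} with $\alpha=\beta=1$ --- i.e., delete the clique edges inside $A$, note the degree hypothesis is inherited, and invoke Theorem~\ref{thm5}. Your construction, your use of the degree hypothesis, and your factor bookkeeping coincide with this.

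The substantive content of your write-up is the mixed-claw obstacle you flag, and your instinct that it is ``the main obstacle'' is correct --- but your resolution of it by ``convention'' is not a proof, and the paper offers nothing better: it silently assumes the same equivalence. Under the standard induced-subgraph definition of a $t$-claw (the one used in the literature the paper builds on, e.g.\ Bonomo-Braberman et al.\ and Hsieh et al.), the direction you actually need --- that the set $F$ returned by Algorithm~\ref{alg1} on $G$ is a $t$-claw deletion set for $H$ --- genuinely fails. Concretely, for $t=3$ take $A=\{v,a\}$, $B=\{b_1,\dots,b_4,c_1,\dots,c_4\}$, with $v$ adjacent to the $b_i$ and $a$ adjacent to the $c_i$; every vertex of $A$ has $2(t-1)=4$ neighbours in $B$, so the hypothesis of Theorem~\ref{thm6} holds. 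The set $S=\{b_3,b_4,c_3,c_4\}$ is a \emph{minimal} one-sided deletion set for $G$, yet $H[V\setminus S]$ still contains the induced claw with center $v$ and leaves $a,b_1,b_2$. So ``one-sided $t$-claw free with leaves in $B$'' is strictly weaker than ``induced $t$-claw free'' for split graphs, and repairing the equivalence is not a matter of convention: one would have to strengthen feasibility on $G$ to the pairwise condition that for all distinct $v,a\in A\setminus S$ the set $N_B(v)\setminus(S\cup N(a))$ has at most $t-2$ elements, which Algorithm~\ref{alg1} does not guarantee. (The same issue infects the correspondence claimed in Theorem~\ref{thm2}, on which the paper's hardness result also rests.) In short: you reproduced the paper's argument, and to your credit you made its hidden assumption explicit; but the gap you identified is real, it is not closed by declaring the leaves to lie in $B$, and it is present in the paper's own proof as well.
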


Next, we are considering two maximization problems \texttt{Max-$t$-Claw-Subgraph} and \texttt{Max-$t$-OSBC-Subgraph} which are  complementary problems of \texttt{Min-$t$-Claw-Del} and \texttt{Min-$t$-OSBCD}, respectively. Given a split graph $G$, in \texttt{Max-$t$-Claw-Subgraph} it is required to find a vertex set $F$ in $G$ of maximum size such that $G[F]$ is a subgraph of $G$ without  having a $t$-claw. Similarly, for bipartite graphs \texttt{Max-$t$-OSBC-Subgraph} for is defined. The $L$-reduction in the proof of Theorem \ref{thm3} can be seen as a $L$-reduction from maximum independent set to \texttt{Max-$t$-OSBC-Subgraph}. The reduction in the proof of Theorem \ref{thm2} can be modified to a $L$-reduction from \texttt{Max-$t$-OSBC-Subgraph} to \texttt{Max-$t$-Claw-Subgraph}. Therefore, it follows that these two maximization problems are $\APX$-hard. Now, we will prove that they can be approximated within a factor of $2-\frac{1}{t}$.

\begin{theorem}
\texttt{Max-$t$-OSBC-Subgraph} can be approximated within a factor of $2-\frac{1}{t}$. Also, for split graphs, \texttt{Max-$t$-Claw-Subgraph} has a $2-\frac{1}{t}$ factor approximation algorithm.
\end{theorem}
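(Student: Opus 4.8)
The plan is to split the argument into a reduction step and an algorithmic step, so that only one maximization problem needs a genuine algorithm. First I would observe that the split-graph construction used in \autoref{thm2} is vertex-set preserving: the same vertex set $F$ induces a $t$-claw-free subgraph of the split graph $H$ exactly when it induces a one-sided $t$-claw-free subgraph of the associated bipartite graph $G$ (equivalently, $V\setminus F$ is a deletion set for one instance iff for the other). Hence the two maximization problems are inter-reducible with $\alpha=\beta=1$, and it suffices to design a $(2-\tfrac1t)$-approximation for \texttt{Max-$t$-OSBC-Subgraph}; the matching $\APX$-hardness has already been recorded via the $L$-reductions above, so only the upper bound remains.

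For \texttt{Max-$t$-OSBC-Subgraph} I would use the local-ratio method for maximization, which fits the primal-dual flavour of the rest of the paper. The forbidden object is a one-sided $t$-claw, a set of $t+1$ vertices (a center $c\in A$ together with $t$ leaves $l_1,\dots,l_t\in B$), and any feasible solution can retain at most $t$ of these $t+1$ vertices. The algorithm repeatedly locates such a claw and subtracts a local weight function $w_1$ that assigns weight $t-1$ to the center and weight $1$ to each of the $t$ leaves (scaled by the current minimum), so that $w_1$ carries total mass $2t-1$ on the claw. Since a feasible set must omit the center or at least one leaf, its $w_1$-value is at most $2t-2$. After recursing on the residual weight $w=w_1+w_2$, I would extract the returned set by reverse deletion, making it a maximal feasible (minimal-deletion) solution. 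The target is to show every such maximal solution has $w_1$-value at least $t$ (achieved, for instance, by keeping all $t$ leaves, or the center together with at least one of its leaves), which gives per-step effectiveness $\frac{t}{2t-2}\ge\frac{t}{2t-1}=\frac{1}{\,2-1/t\,}$; the Local Ratio Theorem then propagates this ratio to the original weights.

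The main obstacle is exactly this effectiveness estimate in the presence of \emph{shared leaves}. A maximal solution may keep the center $c$ while retaining none of $l_1,\dots,l_t$, because each $l_i$ is already blocked by a different center that has exhausted its quota of $t-1$ kept neighbours; in isolation this yields $w_1$-value only $t-1$ and effectiveness $\frac12$, which falls just short of $\frac{t}{2t-1}$. To close this gap I would argue globally rather than claw-by-claw: each deleted leaf is charged to the centers that force its removal, and since every such center keeps exactly $t-1$ neighbours while having degree at least $t$, these excess deletions are amortized against the centers' own contributions to the solution. Combining this amortized charge with the reverse-deletion guarantee — which forces a kept center to fill its leaf quota whenever a leaf can still be added — is what should raise the effectiveness of each $w_1$ to the required $\frac{t}{2t-1}$. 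Feeding the resulting $(2-\tfrac1t)$-approximation for \texttt{Max-$t$-OSBC-Subgraph} back through the vertex-preserving reduction of \autoref{thm2} then yields the same factor for \texttt{Max-$t$-Claw-Subgraph} on split graphs.
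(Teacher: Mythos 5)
Your plan has a genuine gap at its central step, and you have in fact identified it yourself without closing it. The Local Ratio Theorem for maximization requires that the solution finally returned be an $r$-approximation with respect to \emph{each} subtracted weight function $w_1$ \emph{individually}; only then does the decomposition $w = w_1 + w_2$ propagate the ratio. Your own analysis shows this requirement fails: a maximal solution can retain the center of the selected claw and none of its leaves (each leaf blocked by other centers), giving $w_1$-value $t-1$ against an optimum of $2t-2$, i.e.\ ratio $\tfrac12 < \tfrac{t}{2t-1}$ for every $t \geq 3$. The proposed repair --- charging deleted leaves to the blocking centers and amortizing "globally rather than claw-by-claw" --- is not an argument but a hope: once you amortize across different steps' weight functions you are no longer within the local-ratio framework at all, and you would need to prove a summed inequality of the form $\sum_i w_i(\mathrm{ALG}) \geq r \sum_i \max_F w_i(F)$, which requires showing that every deficient claw can be matched to enough surplus elsewhere. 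Nothing in your sketch establishes this (a blocking center that keeps $t-1$ leaves need never have been a claw center in any iteration, so it carries no $w_1$-mass you can draw on). As written, the theorem is not proved.

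The paper's proof is entirely different and much simpler: it never touches the maximization problem directly. Run the factor-$t$ approximation for the complementary problem \texttt{Min-$t$-OSBCD} to get a deletion set $S$, and output the heaviest of the three feasible sets $V\setminus S$, $A$, $B$ (both sides of the bipartition induce edgeless, hence one-sided-$t$-claw-free, subgraphs; in the split-graph case a clique and an independent set are likewise $t$-claw-free). Calling this set $F$, one has $w(F)\geq \tfrac12 w(V)$, and the deletion guarantee $w(S)\leq t\,w(S_0)$ gives $w(V)-w(F)\leq t\,[w(V)-w(F_0)]$, whence $t\,w(F_0)\leq (t-1)w(V)+w(F)\leq (2t-1)w(F)$, i.e.\ $w(F_0)/w(F)\leq 2-\tfrac1t$. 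You may want to internalize this "complement the approximate deletion set, then hedge with the two trivially feasible sides" trick: it converts any $\rho$-approximation for the deletion version into a bounded ratio for the subgraph version with three lines of algebra, precisely where direct primal-dual or local-ratio attacks on the maximization problem run into the kind of blocking obstruction you encountered. (One further caution on your reduction step: a $t$-claw in a split graph may use one leaf inside the clique $A$, so "same vertex set feasible for $H$ iff feasible for $G$" is not literally true in the direction you need; the safe route is to run the best-of-three argument directly on the split graph.)
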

\begin{proof}
Here, we assume that the input instance for \texttt{Max-$t$-OSBC-Subgraph} is a vertex weighted bipartite graph $G=(A \cup B, E)$. By using Algorithm \ref{alg1}, we compute a $t$-\texttt{OSBCD} set $S$ in $G$. Then, we compute the maximum weight set among $V \setminus S, A$ and $B$. Let $F$ be this set. It is easy to observe that $G[F]$ has no $t$-claw with center vertex in $A \cap F$. Also $w(V) \leq 2w(F)$. Without loss of generality, assume that $w(A) \geq w(B)$. Then $w(V) \leq 2w(A)$. This implies that $w(V) \leq 2w(F)$ as $w(A) \leq w(F)$.

Let $S_0$ be a minimum $t$-\texttt{OSBCD} set in $G$. Then $F_0=V \setminus S_0$ induces a maximum $t$-claw free subgraph in $G$. Since $S$ is a $t$-approximate solution of \texttt{Min-$t$-OSBCD}, we have
\begin{eqnarray*}
   & & w(S)  \leq  tw(S_0) \\
 & \Rightarrow &  w(V) - w(V\setminus S) \leq t[w(V) - w(F_0)] \\
  & \Rightarrow  &  w(V) - w(F) \leq t[w(V) - w(F_0)] ~~(as~ w(F) \geq w(V\setminus S))\\
 & \Rightarrow & tw(F_0) \leq (t-1)w(V) + w(F) \leq (2t-1) w(F) \\
 & \Rightarrow & \frac{w(F_0)}{w(F)} \leq 2-\frac{1}{t}.
\end{eqnarray*}
This implies that \texttt{Max-$t$-OSBC-Subgraph} has a $2-\frac{1}{t}$ factor approximation algorithm.

By using a similar argument, it can be proved that \texttt{Max-$t$-Claw-SUbgraph}, for split graphs, has a $2-\frac{1}{t}$ factor approximation algorithm.
 
\end{proof}

Again by using the same arguments and Theorems \ref{thm5}, \ref{thm6}, we can prove the following result.
\begin{theorem}
Let $t \geq 3$ be a positive integer and $G=(A \cup B, E)$ be a bipartite graph with $d(v) \geq 2(t-1)$, for all $v \in A$. \texttt{Max-$t$-OSBC-Subgraph} can be approximated within a factor of $\frac{3}{2}$ for such kind of bipartite graphs. Also, \texttt{Max-$t$-Claw-Subgrah} can be approximated within a factor of $\frac{3}{2}$, for split graphs $G=(A \cup B, E)$ with each vertex $v \in A$ has at least $2(t-1)$ neighbours in $B$.
\end{theorem}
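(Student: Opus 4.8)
The plan is to rerun, almost verbatim, the argument used for the $2-\frac{1}{t}$ approximation in the preceding theorem, with the single decisive change that the degree hypothesis $d(v)\geq 2(t-1)$ for all $v\in A$ now lets me invoke the sharper factor-$2$ guarantee of Theorem~\ref{thm5} in place of the factor-$t$ guarantee. Concretely, I would first run Algorithm-\ref{alg1} on the weighted bipartite instance $G=(A\cup B,E)$ to obtain a $t$-\texttt{OSBCD} set $S$; by Theorem~\ref{thm5} this set satisfies $w(S)\leq 2\,w(S_0)$, where $S_0$ is a minimum-weight $t$-\texttt{OSBCD} set. As before, the output is the maximum-weight set $F$ among the three candidates $V\setminus S$, $A$, and $B$.

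Two facts about $F$ drive the bound, and both carry over unchanged from the previous proof. First, $G[F]$ has no $t$-claw centered in $A\cap F$: if $F$ equals $A$ or $B$ then $G[F]$ is edgeless since $G$ is bipartite, and if $F=V\setminus S$ then feasibility of $S$ gives it directly. Second, since $w(A)+w(B)=w(V)$ forces $\max\{w(A),w(B)\}\geq \tfrac{1}{2}w(V)$ and $w(F)\geq\max\{w(A),w(B)\}$, we get $w(V)\leq 2\,w(F)$. Writing $F_0=V\setminus S_0$ for the optimal one-sided-$t$-claw-free subgraph, the inequality $w(S)\leq 2\,w(S_0)$ rewrites as $w(V)-w(V\setminus S)\leq 2\bigl[w(V)-w(F_0)\bigr]$, and since $w(F)\geq w(V\setminus S)$ this yields $2\,w(F_0)\leq w(V)+w(F)$. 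Substituting $w(V)\leq 2\,w(F)$ gives $2\,w(F_0)\leq 3\,w(F)$, that is $w(F_0)/w(F)\leq \tfrac{3}{2}$, which is the claimed ratio.

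For the split-graph statement I would not repeat the computation but instead transport it through the reduction used in Theorem~\ref{thm2}: adding all edges inside $A$ turns the bipartite instance into a split graph and makes one-sided-$t$-claw-free induced subgraphs coincide with $t$-claw-free induced subgraphs, while the degree hypothesis becomes ``each $v\in A$ has at least $2(t-1)$ neighbours in $B$''. Since Theorem~\ref{thm6} supplies the factor-$2$ approximation for \texttt{Min-$t$-Claw-Del} on exactly such split graphs, the identical chain of inequalities delivers the $\tfrac{3}{2}$ factor for \texttt{Max-$t$-Claw-Subgraph}. I do not expect any genuine obstacle here beyond bookkeeping: the only point that must be checked is that the factor-$2$ bounds of Theorems~\ref{thm5} and~\ref{thm6} are legitimately applicable, which is precisely what the hypothesis $d(v)\geq 2(t-1)$ guarantees; every other ingredient (the three-candidate choice of $F$, the bound $w(V)\leq 2\,w(F)$, and the feasibility of $G[F]$) is identical to the unconditional $2-\frac{1}{t}$ argument.
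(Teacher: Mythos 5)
Your proposal is correct and follows essentially the same route as the paper, which proves this theorem simply by noting that ``the same arguments'' as in the $2-\frac{1}{t}$ theorem apply once the factor-$t$ guarantee is replaced by the factor-$2$ guarantees of Theorems~\ref{thm5} and~\ref{thm6}; your chain $2w(F_0)\leq w(V)+w(F)\leq 3w(F)$ is exactly the specialization the paper intends. In fact your write-up is more explicit than the paper's one-line proof, spelling out the feasibility of the three candidate sets and the substitution step that the paper leaves implicit.
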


\bibliographystyle{abbrv}

\end{document}